\begin{document}

%\setcopyright{acmcopyright}
%\doi{10.475/123_4}

% ISBN
%\isbn{123-4567-24-567/08/06}

%Conference
\conferenceinfo{}{}

%\acmPrice{\$15.00}
\newtheorem{thm}{Theorem}
\newtheorem{theorem}{Theorem}
\newtheorem{thm1}{Theorem}
\newtheorem{rem}[thm1]{Remark}
\newtheorem{thm2}{Theorem}
\newtheorem{defi}[thm2]{Definition}
\newtheorem{fact}[thm]{Fact}
\newtheorem{note}[thm]{Note}
\newtheorem{notation}[thm]{Notation}
\newtheorem{definition}[thm]{Definition}
\newtheorem{proposition}[thm]{Proposition}
\newtheorem{lem}[thm]{Lemma}

\newtheorem{cor}[thm]{Corollary}

\newcommand{\E}{\mathbb{E}}
\newcommand{\NN}{\mathrm{NN}}
\newcommand{\IP}{\mathrm{IP}}
\newcommand{\BCS}{\mathrm{BCS}}
\newcommand{\RCS}{\mathrm{RCS}}
\newcommand{\JS}{\mathrm{JS}}
\newcommand{\tb}{\mathrm{\psi}}
\newcommand{\tu}{\mathrm{{\psi}_u}}
\newcommand{\tm}{\mathrm{\Psi}}
\newcommand{\dH}{\mathrm{d_H}}
\newcommand{\N}{\mathrm{N}}
\newcommand{\Ham}{\mathrm{Ham}}
\newcommand{\Sim}{\mathrm{Sim}}
\newcommand{\D}{\mathcal{D}}
\newcommand{\Var}{\mathrm{Var}}
\newcommand{\Cov}{\mathrm{Cov}}
\newcommand{\G}{\mathcal{G}}
\newcommand{\R}{\mathbb{R}}
\renewcommand{\O}{\tilde{O}}
\newcommand\numberthis{\addtocounter{equation}{1}\tag{\theequation}}

% --- Author Metadata here ---
%\conferenceinfo{WOODSTOCK}{'97 El Paso, Texas USA}
%\CopyrightYear{2007} % Allows default copyright year (20XX) to be over-ridden - IF NEED BE.
%\crdata{0-12345-67-8/90/01}  % Allows default copyright data (0-89791-88-6/97/05) to be over-ridden - IF NEED BE.
% --- End of Author Metadata ---

\title{Similarity preserving compressions of high dimensional sparse data}
%\subtitle{[Extended Abstract]
%\titlenote{A full version of this paper is available as
%\textit{Author's Guide to Preparing ACM SIG Proceedings Using
%LaTeX$2_\epsilon$\ and BibTeX} at
%\texttt{www.acm.org/eaddress.htm}}}
%
% You need the command \numberofauthors to handle the 'placement
% and alignment' of the authors beneath the title.
%
% For aesthetic reasons, we recommend 'three authors at a time'
% i.e. three 'name/affiliation blocks' be placed beneath the title.
%
% NOTE: You are NOT restricted in how many 'rows' of
% "name/affiliations" may appear. We just ask that you restrict
% the number of 'columns' to three.
%
% Because of the available 'opening page real-estate'
% we ask you to refrain from putting more than six authors
% (two rows with three columns) beneath the article title.
% More than six makes the first-page appear very cluttered indeed.
%
% Use the \alignauthor commands to handle the names
% and affiliations for an 'aesthetic maximum' of six authors.
% Add names, affiliations, addresses for
% the seventh etc. author(s) as the argument for the
% \additionalauthors command.
% These 'additional authors' will be output/set for you
% without further effort on your part as the last section in
% the body of your article BEFORE References or any Appendices.

\numberofauthors{2} %  in this sample file, there are a *total*
% of EIGHT authors. SIX appear on the 'first-page' (for formatting
% reasons) and the remaining two appear in the \additionalauthors section.
%
\author{
\alignauthor
Raghav Kulkarni\\
       \affaddr{LinkedIn Bangalore}\\
       %\affaddr{1932 Wallamaloo Lane}\\
       %\affaddr{Wallamaloo, New Zealand}\\
       \email{{kulraghav@gmail.com}}
% 2nd. author
\alignauthor
Rameshwar Pratap\\%\titlenote{The secretary disavowsany knowledge of this author's actions.}\\
       %\affaddr{International Institute of Information Technology Bangalore}\\
       \affaddr{IIIT Bangalore}\\
      % \affaddr{Bangalore}\\
      % \affaddr{Dublin, Ohio 43017-6221}\\
       \email{rameshwar.pratap@gmail.com}
}

\maketitle
\begin{abstract}
The rise of internet has resulted in an explosion of data consisting of 
millions of articles, images, songs, and videos. Most of this data is 
high dimensional and sparse. The need to perform an efficient search 
for similar objects in such high dimensional big datasets is becoming 
increasingly common. Even with the rapid growth in computing power, 
the brute-force search for such a task is impractical and at times 
impossible. Therefore algorithmic solutions such as Locality Sensitive Hashing (LSH) 
are required to achieve the desired efficiency in search.

Any similarity search method that achieves the efficiency uses one (or both) 
of the following methods: $1.$ Compress the data by reducing its dimension 
while preserving the similarities between any pair of data-objects $2.$ 
Limit the search space by grouping the data-objects based on their similarities.
Typically $2$ is obtained as a consequence of $1$.

Our focus is on high dimensional sparse data, where the standard compression schemes, 
such as LSH for Hamming distance (Gionis, Indyk and Motwani~\cite{GIM99}), 
become inefficient in both $1$ and $2$ due to at least one of the following reasons:  
$1$. No efficient compression schemes mapping binary vectors to binary vectors
$2$. Compression length is nearly linear in the dimension and grows inversely with the sparsity
%b. compression time grows linearly with the product of dimension and compression length
$3$. Randomness used grows linearly with the product of dimension and compression length.

We propose an efficient compression scheme mapping binary vectors into binary vectors 
and simultaneously preserving Hamming distance and Inner Product. Our schemes avoid all 
the above mentioned drawbacks for high dimensional sparse data. The length of our 
compression depends only on the sparsity and is independent of the dimension of the data. 
Moreover our schemes provide one-shot solution for Hamming distance and Inner Product, and 
work in the streaming setting as well. In contrast with the ``local projection'' strategies 
used by most of the previous schemes, our scheme combines (using sparsity) the following 
two strategies: $1.$ Partitioning the dimensions into several buckets, $2.$ Then obtaining 
``global linear summaries'' in each of these buckets. We generalize our scheme for real-valued 
data and obtain compressions for Euclidean distance, Inner Product, and $k$-way Inner Product.
\end{abstract}

\section{Introduction}
The technological advancements have led to the generation of huge
amount of data over the web such as texts, images, audios, and videos. 
Needless to say that most of these datasets are high dimensional. % (and sparse)owing to, for instance, the bag-of-words representations used for texts and images. 
% Huge amount of data is being transferred on the Internet everyday and the transfer time is highly affected by its size.
%High-dimensional datasets are ubiquitous in a variety of  applications such as e-commence, computer vision, 
%text %processing, bioinformatics, and world wide web. 
Searching for similar data-objects in such massive and high dimensional datasets is becoming a fundamental 
subroutine in many scenarios like clustering, classification, nearest neighbors, ranking etc. 
However, due to the \textit{``curse of dimensionality''} 
a brute-force way to compute the similarity scores   on such data sets is very expensive and at times infeasible.
%\subsection{Why are similarity preserving compression schemes important ?}
Therefore it is quite natural to investigate the techniques that compress the dimension of dataset 
while  preserving the similarity between data objects. There are various compressing schemes 
that have been already studied for different similarity measures. 
We would like to emphasize  that any such compressing scheme is useful only when it satisfies the following guarantee, 
\textit{i.e.} when data objects are ``nearby'' (under the desired similarity measure), then they should 
remain near-by in the compressed version, and when they are ``far'', they should remain far in the
compressed version. In the case of probabilistic compression schemes the above should happen with high probability. 
Below we discuss a few such notable schemes. In this work we consider  binary and real-valued datasets.
For binary data we focus on Hamming distance and Inner product, while for real-valued data we focus on 
Euclidean distance and Inner product.

\subsection{Examples of similarity preserving compressions }
Data objects in a datasets can be considered as points (vectors) in high dimensional space. 
 Let we have  $n$ vectors (binary 
or real-valued) in $d$-dimensional space.  

\begin{itemize}
\item Gionis, Indyk, Motwani~\cite{GIM99} proposed a data structure to solve  
approximate nearest neighbor ($c$-$\NN$) problem 
in binary data for {\bf Hamming distance}.  Their scheme popularly  known 
as {\bf Locality Sensitive Hashing} (LSH). Intuitively, their data structure can be 
viewed as a compression of a binary vector, which is obtained by projecting it on a 
randomly chosen bit positions. % with $K=O\left(\log_{\frac{1}{p_2}} n\right)$.
% it represent  the number of hashes in each hash table, and $L$ is the number of hash tables. 
%\begin{comment}
%\end{comment}
\item \textbf{JL transform}~\cite{JL83} suggests a compressing scheme for real-valued data.   
For any $\epsilon>0$, it compresses the dimension of the points from $d$ to 
$O\left(\frac{1}{\epsilon^2}\log n\right)$ while preserving 
the \textbf{Euclidean distance} between any pair of points within factor of $(1\pm \epsilon)$.

\item Given two vectors $\mathbf{u},\mathbf{v} \in \R^d$, the \textbf{inner product similarity}
between them is defined as $\langle \mathbf{u},\mathbf{v}\rangle :=\Sigma_{i=1}^d\mathbf{u}[i] \mathbf{v}[i].$  
Ata Kab\'{a}n~\cite{Kaban15} suggested a compression schemes for real data which preserves   inner product 
\textit{via} \textbf{random projection}. On the contrary, if the input data is binary, and it  
is desirable to get the compression only in binary data, then to the best of our 
knowledge no such compression scheme is available which achieves a non-trivial compression. 
However, with some sparsity assumption (bound on the number of $1$'s),  
there are some schemes available  which \textit{via} asymmetric  padding (adding a few extra bits in the vector) 
reduce the inner product similarity (of the original data) to the Hamming~\cite{BeraP16}, 
and Jaccard similarity (see Preliminaries for a definition)~\cite{ShrivastavaWWW015}. Then the compression scheme for 
Hamming or Jaccard can be applied on the padded version of the data.

\item Binary data can also be viewed as a collection of sets, then the underlying similarity 
measure of interest  can be the \textbf{Jaccard similarity}. 
Broder \textit{et. al.}~\cite{Broder00,BroderCFM98,BroderCPM00} suggested a compression scheme for preserving Jaccard similarity between sets  
which is popularly known as \textbf{Minwise permutations}.

\begin{comment}
\item The \textbf{cosine similarity} for two vectors suggest a measure of similarity 
which calculate the cosine of the angle between them. \textbf{Signed Random Projection} 
due to Charikar~\cite{Charikar02} suggest a compression scheme for preserving cosine similarity between vectors.
\end{comment}

\end{itemize}
\begin{comment}
\subsection*{Efficient search via composing with LSH}
Due to the ``curse of dimensionality'' many search algorithms scale poorly in high dimensional data. 
So, if it is possible to get a succinct compression of data while preserving the similarity 
score between pair of data points, the such compression naturally helps for efficient  search.  
One can first compress the input such that it preserve the desired 
similarity measure, and then  can apply a collision based hashing algorithm such as 
LSH~\cite{GIM99, IM98} for efficient approximate nearest neighbor ($c$-$\NN$)  on the compressed data. 
%LSH is a collision based hashing algorithm for efficient approximate nearest neighbor ($c$-$\NN$) search problem 
%in high dimensional space.  A good compression scheme naturally helps in efficient approximate search as follows:  
%first compress the input such that it preserve the desired 
%similarity measure, and then apply collision based LSH on top of it. 
\end{comment}
\subsection{Our focus: High dimensional (sparse) data}
In this work, we focus  on High Dimensional Sparse Data.  In many real-life scenarios, 
data object is represented as very high-dimensional but sparse vectors, \textit{i.e.} 
number of  all possible attributes (features) is huge, however, each data object has 
only a very small subset of attributes. For example, in bag-of-word representation of 
text data, the number of dimensions equals to the size of vocabulary, which is large. 
However for each data point, say a document, contains only a small number of words in 
the vocabulary, leading to a sparse vector representation. The bag-of-words representation 
is also commonly used for image data. Data-sparsity is commonly prevalent in audio and 
video-data as well.
%\\\textbf{****add some more details****}

\subsection{Shortcomings of earlier schemes for high dimensional (sparse) data}
The quality of any compression scheme can be evaluated based on the following two parameters - 1) 
the \textit{compression-length}, and  2) the amount of \textit{randomness} required for the compression. 
The compression-length is defined as the dimension of the data after compression.
%Here we point out shortcomings of above mentioned compression scheme for high dimensional sparse data.
 %Let we call the %time required to create a similarity preserving compression as  ``compression-time'',  and  
 %dimension of the data after compression   as ``compression-length''. 
  %Usually, compression-time is proportional to compression-length. 
 % Another important factor in any such % probabilistic  compression scheme is the amount of ``randomness'' required. 
  Ideally, it is desirable to
 have  both of these to be small while preserving a desired accuracy in the compression. Below
  we will notice that  most of the above mentioned compression schemes   
 become in-feasible in the case of high dimensional  sparse datasets as 1) their %compression-time, 
  compression-length is  very high, and 
 2) the amount of randomness required for the 
 compression is quite huge. 

\begin{itemize}
 \item \textbf{Hamming distance:} Consider the problem of finding $c$-$\NN$ (see Definition~\ref{definition:cNN})
 for Hamming distance in binary data.  In the  LHS scheme, the size of hashtable determines the compression-length.  
The size of hashtable $K=O\left(\log_ {\frac{1}{p_2}} n \right)$ (see Definition~\ref{definition:LSH}). 
If $r=O(1)$, then the size of hashtable $K=O\left(\log_ {\frac{1}{p_2}} n \right)= O(\frac{d}{cr}\log n)=O(d\log n)$, 
which is linear in the dimension. Further,   in order to randomly choose a 
 bit position (between $1$ to $d$), it is require to generate $O(\log d)$ many random bits. 
    Moreover, as the size of hash table is $K$,  %and within a hashtable bits are chosen independently, 
    and the number of hash tables is $L$, it is required to generate $O(KL\log d)$ many random bits to create the hashtable, 
    which become quite large specially when $K$ is linear in $d$.

   \begin{comment}
  \item \textbf{Cosine similarity:} In signed random projection each hash function amounts 
  to a random hyperplane in dimension-$d$. Hashing a vector requires a $O(d^2)$ compression-time - time 
  required to  project a vector on the random hyperplane; time require to hash $n$ vectors 
  with $O\left(\frac{1}{\epsilon^2}\log^2 n\right)$ hash functions is  $O(\frac{1}{\epsilon^2}d^2n \log^2n)$. 
  The amount of randomness required for the hashing is $O(\frac{1}{\epsilon^2}dn\log^2n)$; and 
  the compression-length is  $O\left(\frac{1}{\epsilon^2} \log^2 n\right)$.
  \end{comment}
  \item \textbf{Euclidean distance:} In order to achieve compression that preserve the distance 
  between any pair of points, due to JL transform~\cite{JL83, Achlioptas03}, it is required 
  to project the input matrix on a random matrix of 
  dimensions $d\times k$, where $k=O\left(\frac{1}{\epsilon^2}\log n\right)$. Each entry of the
  random matrix is chosen from $\{\pm 1\}$  with probability  $\frac{1}{2}$ (see~\cite{Achlioptas03}), 
  or from a normal distribution (see~\cite{JL83}).
    % compression-time$=O\left(\frac{1}{\epsilon^2}dn\log n\right)$, 
  The compression-length in this scheme is  $O\left(\frac{1}{\epsilon^2}\log n\right) $, and it requires \\$O\left(\frac{1}{\epsilon^2}d\log n\right)$
    randomness. 
  %Although in   this case we achieve a good compression-length, however, compression-time-\textit{i.e.,} 
  %time required for projecting input matrix on random matrix $O(ndk)$, and the amount of 
  %randomness required-\textit{i.e.,} size of random matrix $O(dk)$, are quite huge. 
  
  \item \textbf{Inner product:} 
  Compression schemes which compress binary data into binary data while preserving Inner 
  product is not known. However using \textit{``asymmetric padding scheme''} of~\cite{BeraP16,ShrivastavaWWW015} 
  it is possible to get a compression via Hamming or Jaccard Similarity measure, then shortcomings of Jaccard and Hamming will 
  get carry forward in such scheme. Further, in  case of   real valued data the compression scheme of 
  Ata Kab\'{a}n~\cite{Kaban15}  has  % compression-time$=O\left(\frac{1}{\epsilon^2}dn\log n\right)$, 
  compression-length  $=O\left(\frac{1}{\epsilon^2}\log n\right) $, and requires $O\left(\frac{1}{\epsilon^2}d\log n\right)$
    randomness. 

    \item\textbf{Jaccard Similarity:} Minhash permutations~\cite{Broder00,BroderCFM98,BroderCPM00} suggest a compression scheme for preserving Jaccard similarity 
  for   a collection of sets. A major disadvantage of this scheme is that for high dimensional 
  data computing permutations are very expensive, and further in order to achieve a reasonable 
  accuracy in the compression a larger number of repetition might be required. 
  A major disadvantage of this scheme is that it requires substantially large 
  amount of randomness that grows polynomially  in the dimension. %\textbf{****state precise bounds****}
   \end{itemize}
  \vspace{-0.5cm}
   \paragraph{Lack of good binary to binary compression schemes}
  To summarize the above, there are two main compression schemes currently available for binary to binary compression. 
  The first one is LSH and the second one is JL-transform.
  The LSH requires the compression size to be linear in the dimension and the JL-transform can achieve logarithmic 
  compression size but it will compress binary vectors to real vectors. The analogue of JL-transform which compresses 
  binary vectors to binary vectors requires the compression-length to be linear in the number of data points
  (see Lemma~\ref{lem:analogousJL}.)
  Since both dimension as well as the number of data points can be large, these schemes are inefficient.
  In this paper we propose an efficient binary to binary compression scheme for sparse data which works simultaneously 
  for both Hamming distance and Inner Product.
   \subsection{Our contribution}
In this work we present a compressing scheme for high dimensional sparse  data. 
In contrast with the ``local projection''
strategies used by most of the previous schemes such as LSH~\cite{IM98,GIM99} and JL~\cite{JL83}, 
our scheme combines (using sparsity) the following two step approach 1. Partitioning the dimensions 
into several buckets, 2. Then obtaining ``global linear summaries'' of each of these buckets.
 We present our result below:
\subsubsection{For binary data}
For binary data, our compression scheme  provides  one-shot solution for  both Hamming and Inner 
product --  compressed data preserves both Hamming distance and Inner product.
Moreover, the compression-length depends only on the sparsity of data and is independent of the dimension of data. 
We first informally state our compression scheme for binary data, see Definition~\ref{defi:bcs} 
for a formal definition.

Given a binary vector $\textbf{u}\in \{0,1\}^{d}$, our scheme compress it into a 
$\N$-dimensional binary vector (say) $\mathbf{u'}\in\{0,1\}^{\N}$ as follows, where $\N$ to be specified later. 
We randomly map each bit position (say) $\{i\}_{i=1}^d$ of the original
data to an integer $\{j\}_{j=1}^{\N}$. To compute the $j$-th bit of the compressed vector $\mathbf{u'}$ 
we check which bits positions have been mapped to $j$,  we compute 
the parity of bits located at those positions, and assign it to $\mathbf{u'}[j].$ 
The following figure illustrate an example of the compression. 
 \begin{figure}[ht!]
\centering
\includegraphics[scale=.033]{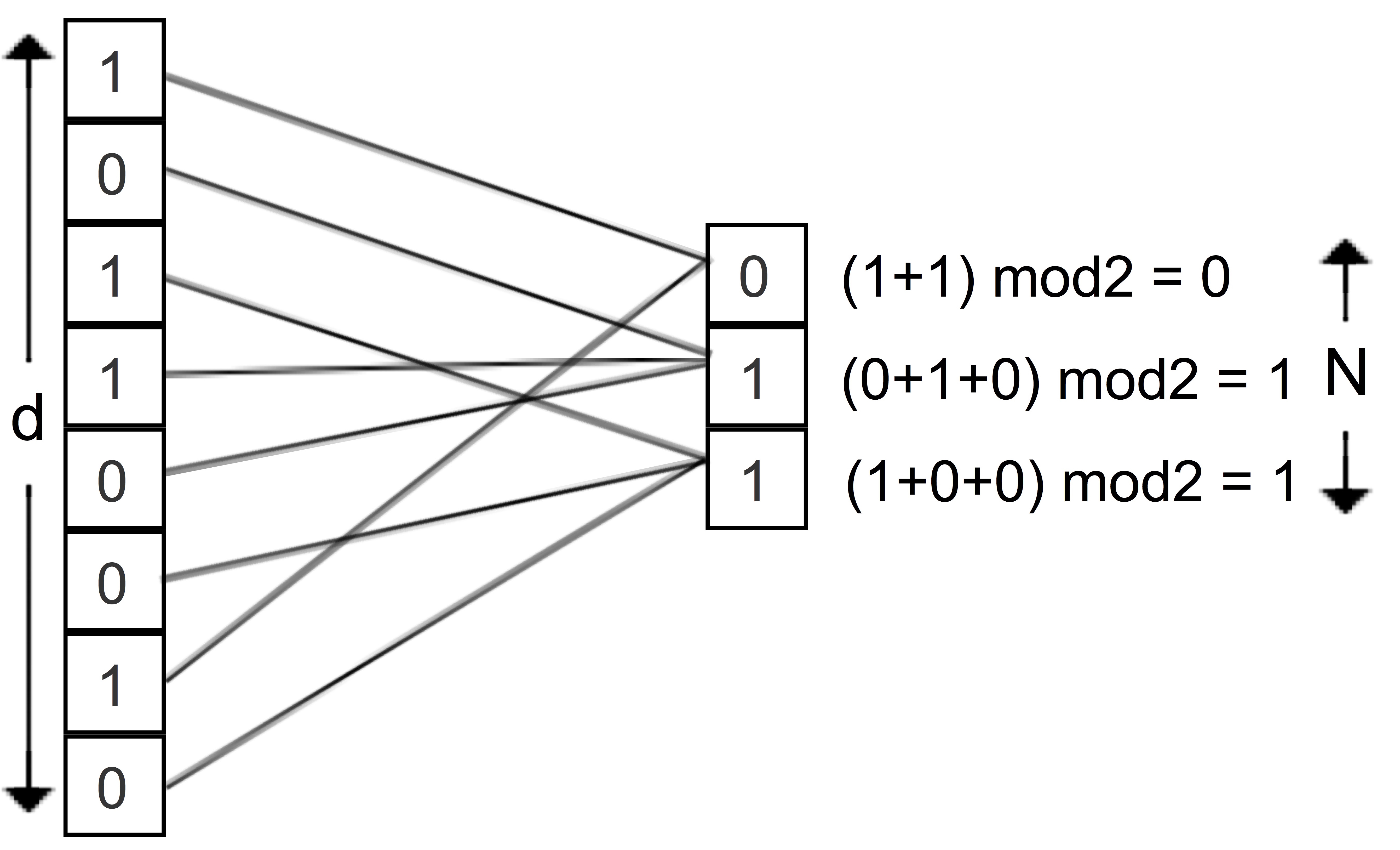}
%\caption{A simple caption \label{overflow}}
\end{figure}
%(Hamming, Jaccard, Cosine, Inner product) and giving a one-shot solution. 
In the following theorems  let $\tb$ denote the maximum number of $1$ in any vector.
We state our result for binary data as follows:
\begin{theorem}\label{theorem:compressionHamming}
Consider a set $\mathrm{U}$ of binary vectors \\$\{\mathbf{u_i}\}_{i=1}^n\subseteq \{0, 1\}^d$, 
  a positive  integer $r$, and   $\epsilon>0$. 
  If $\epsilon r >3 \log n$,  we set $\N=O({\tb}^2)$;  if $\epsilon r < 3 \log n$, 
    we set $\N=O({\tb}^2\log^2n) $,  and compress them 
  into  a set  $\mathrm{U'}$ of binary vectors $\{\mathbf{u_i'}\}_{i=1}^n\subseteq\{0, 1\}^{\N}$  using 
  our Binary Compression  Scheme. %($\BCS$~\ref{defi:bcs}).
    Then for all  $\mathbf{u_i}, \mathbf{u_j}\in \mathrm{U}$, % (1\leq i, j\leq n)$ 
\begin{itemize}
 \item if $\dH(\mathbf{u_i}, \mathbf{u_j})< r$, then $\Pr [\dH({\mathbf{u_i}}', {\mathbf{u_j}}')< r]=1$,
  \item if $\dH(\mathbf{u_i}, \mathbf{u_j})\geq (1+\epsilon)r$, then $\Pr [\dH({\mathbf{u_i}}', {\mathbf{u_j}}')< r]<\frac{1}{n}.$
\end{itemize}
  \begin{comment}
 Consider a set of binary vectors $\{\mathbf{u_i}\}_{i=1}^n\subseteq \{0, 1\}^d$, a positive  integer $r$,  
 $\epsilon>0$. 
  If $\epsilon r >3 \log n$  we set $\N=O({\tb}^2)$;  if $\epsilon r < 3 \log n$ 
    we set $\N=O({\tb}^2\log^2n) $,  and compress them 
  into  $\{\mathbf{u_i'}\}_{i=1}^n\subseteq\{0, 1\}^{\N}$  using our Binary Compression Scheme. %the Binary Compression    $\BCS$~\ref{defi:bcs}.  
  Then for any pair of vectors $\mathbf{u_i}, \mathbf{u_j}$% (1\leq i, j\leq n)$ 
  \begin{itemize}
 \item if $\dH(\mathbf{u_i}, \mathbf{u_j})< r$, then $\Pr [\dH({\mathbf{u_i}}', {\mathbf{u_j}}')< r]=1$, and 
 \item if $\dH(\mathbf{u_i}, \mathbf{u_j})\geq (1+\epsilon)r$, then $\Pr [\dH({\mathbf{u_i}}', {\mathbf{u_j}}')< r]<\frac{1}{n}.$
\end{comment}
\end{theorem}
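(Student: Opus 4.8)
\smallskip\noindent\textbf{Proof idea.} The plan is to reduce the two-vector statement to a one-vector statement by exploiting that, for any fixed realisation of the random map $\sigma\colon[d]\to[\N]$, the Binary Compression Scheme is a linear map over $\mathbb{F}_2$: each output coordinate $\mathbf{u}'[j]=\bigoplus_{i\,:\,\sigma(i)=j}\mathbf{u}[i]$ is an $\mathbb{F}_2$-linear form in the input. Hence, writing $\mathbf{w}=\mathbf{u_i}\oplus\mathbf{u_j}$, we get $\mathbf{u_i}'\oplus\mathbf{u_j}'=\mathbf{w}'$, where $\mathbf{w}'$ is the compression of $\mathbf{w}$ under the same $\sigma$, so that $\dH(\mathbf{u_i}',\mathbf{u_j}')=|\mathbf{w}'|$ equals the number of buckets receiving an odd number of coordinates of $\mathrm{supp}(\mathbf{w})$. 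Set $s=\dH(\mathbf{u_i},\mathbf{u_j})=|\mathrm{supp}(\mathbf{w})|$ and recall $s\le 2\tb$. Everything now reduces to understanding $|\mathbf{w}'|$ for a single $s$-sparse vector.

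The ``near'' case is immediate and deterministic: if $s<r$, every bucket that is $1$ in $\mathbf{w}'$ must contain at least one coordinate of $\mathrm{supp}(\mathbf{w})$, and the image of an $s$-element set under $\sigma$ has at most $s$ elements, so $|\mathbf{w}'|\le s<r$ for every $\sigma$; thus $\Pr[\dH(\mathbf{u_i}',\mathbf{u_j}')<r]=1$, with no condition on $\N$.

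For the ``far'' case, assume $s\ge(1+\epsilon)r$. I would lower bound $|\mathbf{w}'|$ by the number $L$ of coordinates of $\mathrm{supp}(\mathbf{w})$ that fall alone into their bucket: each such coordinate sits in its own bucket of (odd) size $1$, so $|\mathbf{w}'|\ge L=s-Z$, where $Z$ counts the coordinates of $\mathrm{supp}(\mathbf{w})$ that share a bucket with another such coordinate. Consequently $\{\dH(\mathbf{u_i}',\mathbf{u_j}')<r\}\subseteq\{Z>s-r\}\subseteq\{Z>\epsilon r\}$, so it suffices to bound $\Pr[Z>\epsilon r]$. The combinatorial core is: if $Z\ge 3m$ then $\mathrm{supp}(\mathbf{w})$ contains $m$ pairwise-disjoint ``colliding pairs'' (inside any bucket holding $c\ge 2$ of these coordinates take $\lfloor c/2\rfloor\ge c/3$ disjoint pairs, then sum over buckets). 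Now, for a fixed family of $m$ pairwise-disjoint pairs, the $m$ events ``this pair collides'' are determined by disjoint blocks of the independent variables $\{\sigma(i)\}$, hence independent, each of probability $1/\N$; and there are at most $s^{2m}/(2^{m}m!)$ such families. Therefore
\[
\Pr[\dH(\mathbf{u_i}',\mathbf{u_j}')<r]\ \le\ \Pr[Z\ge 3m]\ \le\ \frac{1}{m!}\left(\frac{s^{2}}{2\N}\right)^{\!m}\ \le\ \left(\frac{e\,s^{2}}{2m\N}\right)^{\!m},
\]
and I would then choose $m=\lfloor\epsilon r/3\rfloor$ (handling the leftover range $\epsilon r<3$, which only arises in the second regime, by the cruder estimate $\Pr[Z\ge 2]\le\binom{s}{2}/\N$).

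Finally I would plug in the two settings of $\N$ together with $s\le 2\tb$. If $\epsilon r>3\log n$ and $\N=O(\tb^{2})$, then for a large enough hidden constant every factor $s^{2}/(2\N)$ is a fixed fraction, making the bound exponentially small in $m=\Theta(\epsilon r)=\Omega(\log n)$ and hence $<1/n$. If $\epsilon r<3\log n$ and $\N=O(\tb^{2}\log^{2}n)$, each factor is $O(1/\log^{2}n)$, so the bound is $\bigl(O(1/\log^{2}n)\bigr)^{m}$ with $m=\Theta(\epsilon r)$, and the additional $\log^{2}n$ factor in $\N$ is what drives this below $1/n$. The main obstacle is exactly this last point: since $Z$ (equivalently, the number of bucket collisions among the $s$ active coordinates) has only constant expectation but we need a failure probability as small as $1/n$, a plain Markov/Chebyshev estimate is far too weak, and one really needs the ``disjoint colliding pairs $+$ independence'' device above, after which the two regimes follow by choosing the hidden constants in $\N$ carefully. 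The $\mathbb{F}_2$-linearity reduction and the deterministic ``near'' case are the easy parts.
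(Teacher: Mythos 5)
Your reduction to the single vector $\mathbf{w}=\mathbf{u_i}\oplus\mathbf{u_j}$, the deterministic ``near'' case, and the ``disjoint colliding pairs $+$ independence'' device are all sound, and in the regime $\epsilon r>3\log n$ your argument is essentially the paper's: Lemma~\ref{lem:compression} there performs the same union bound over families of $\Theta(\epsilon r)$ disjoint pairs of active positions, each pair colliding with probability $1/\N$, arriving at $(2\tb/\sqrt{\N})^{\epsilon r}$, after which Lemma~\ref{lem:compressionBound} takes a union bound over the $\binom{n}{2}$ pairs. If anything your version is tighter, since you bound the loss in Hamming distance by the number of support coordinates of $\mathbf{w}$ that fail to sit alone in their bucket, rather than by the looser ``corrupted bucket'' accounting in the paper.

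The gap is in the second regime, $\epsilon r<3\log n$. There you keep the exponent at $m=\lfloor\epsilon r/3\rfloor$ and rely on the enlarged $\N=O(\tb^{2}\log^{2}n)$ only to shrink the base to $O(1/\log^{2}n)$; but when $\epsilon r=O(1)$ this yields $\bigl(O(1/\log^{2}n)\bigr)^{O(1)}$, and your fallback estimate $\Pr[Z\ge 2]\le\binom{s}{2}/\N=O(1/\log^{2}n)$ is likewise nowhere near the required $1/n^{3}$ per pair. This is not a repairable slack in the analysis: if $s=\dH(\mathbf{u_i},\mathbf{u_j})=(1+\epsilon)r$ with $\epsilon r<2$ and $s=\Theta(\tb)$, a single collision between two coordinates of $\mathrm{supp}(\mathbf{w})$ already forces $\dH(\mathbf{u_i}',\mathbf{u_j}')\le s-2<r$, and such a collision occurs with probability $\Theta(\tb^{2}/\N)=\Theta(1/\log^{2}n)\gg 1/n$, so the conclusion fails for the unmodified scheme. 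The paper escapes this by changing the scheme in this regime: in the proof of Lemma~\ref{lem:compressionBound} each bit position is first replicated $3\log n$ times, so that distances --- and hence the gap that must be destroyed by collisions --- scale up to $3\epsilon r\log n\ge 3\log n$, and only then is the bucketing applied with $\N=144\tb^{2}\log^{2}n$; the $\log^{2}n$ blow-up in $\N$ exactly compensates for the $3\log n$-fold increase in the number of active positions, keeping the base at $1/2$ while the exponent becomes $\Omega(\log n)$. You would need to incorporate this replication step (or an equivalent amplification) for the second bullet of the theorem to go through when $\epsilon r$ is small.
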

\begin{theorem}\label{theorem:compressionIP}
Consider a set $\mathrm{U}$ of binary vectors \\$\{\mathbf{u_i}\}_{i=1}^n\subseteq \{0, 1\}^d$, 
 a positive  integer $r$, and   $\epsilon>0$. 
  If $\epsilon r >3 \log n$,  we set $\N=O({\tb}^2)$;  if $\epsilon r < 3 \log n$,  
    we set $\N=O({\tb}^2\log^2n) $,  and compress them into 
    a set  $\mathrm{U'}$ of binary vectors
    $\{\mathbf{u_i'}\}_{i=1}^n\subseteq\{0, 1\}^{\N}$  using our Binary Compression   Scheme.   %($\BCS$~\ref{defi:bcs}).  
    Then for all $\mathbf{u_i}, \mathbf{u_j}\in \mathrm{U}$ the following is true with probability
 at least $1-\frac{1}{n}$,
 \[
  (1-\epsilon)\IP(\mathbf{u_i}, \mathbf{u_j})\leq \IP({\mathbf{u_i}}', {\mathbf{u_j}}')\leq (1+\epsilon)\IP(\mathbf{u_i}, \mathbf{u_j}).
 \]
 \begin{comment}
 Consider a set of binary vectors $\{\mathbf{u_i}\}_{i=1}^n\subseteq \{0, 1\}^d$, a positive  integer $r$, and   $\epsilon>0$. 
  If $\epsilon r >3 \log n$  we set $\N=O({\tb}^2)$;  if $\epsilon r < 3 \log n$ 
    we set $\N=O({\tb}^2\log^2n) $,  and compress them 
  into  $\{\mathbf{u_i'}\}_{i=1}^n\subseteq\{0, 1\}^{\N}$  using %the Binary Compression   Scheme
  $\BCS$~\ref{defi:bcs}.  Then for any pair of vectors $\mathbf{u_i}, \mathbf{u_j}$ the following is true with probability
 at least $1-\frac{1}{n}$
 \[
  (1-\epsilon)\IP(\mathbf{u_i}, \mathbf{u_j})\leq \IP({\mathbf{u_i}}', {\mathbf{u_j}}')\leq (1+\epsilon)\IP(\mathbf{u_i}, \mathbf{u_j}).
 \]
 \end{comment}
\end{theorem}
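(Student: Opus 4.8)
The plan is to reduce the inner‑product guarantee to a bound on the number of hash collisions, in the same spirit as one argues for Hamming distance. Let $h:[d]\to[\N]$ be the random map defining the Binary Compression Scheme, and for a binary vector $\mathbf{w}$ write $S_{\mathbf{w}}=\{i:\mathbf{w}[i]=1\}$; then $\mathbf{w}'[j]$ is the parity of $|S_{\mathbf{w}}\cap h^{-1}(j)|$, so $\|\mathbf{w}'\|_1$ is exactly the number of buckets receiving an odd number of coordinates of $S_{\mathbf{w}}$. Two elementary facts drive everything. First, since the compressed vectors are binary, $\IP(\mathbf{u}',\mathbf{v}')=\tfrac12\bigl(\|\mathbf{u}'\|_1+\|\mathbf{v}'\|_1-\dH(\mathbf{u}',\mathbf{v}')\bigr)$, and one checks that $\dH(\mathbf{u}',\mathbf{v}')$ is the number of buckets receiving an odd number of coordinates of $S_{\mathbf{u}}\triangle S_{\mathbf{v}}$. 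Second, for any set $S$ of size $m$, letting $Z_S:=\#\{\{i,i'\}\subseteq S:h(i)=h(i')\}$ be its number of colliding pairs, the number of odd‑count buckets lies in $[\,m-2Z_S,\;m\,]$, because a bucket holding $k\ge 1$ coordinates "wastes" $k-[k\text{ odd}]\le k(k-1)=2\binom{k}{2}$ of them. (An alternative is a direct per‑bucket analysis: with $P=S_{\mathbf{u}}\cap S_{\mathbf{v}}$, $Q=S_{\mathbf{u}}\setminus S_{\mathbf{v}}$, $R=S_{\mathbf{v}}\setminus S_{\mathbf{u}}$, a bucket contributes to $\IP(\mathbf{u}',\mathbf{v}')$ iff it has odd $P$‑count and even $Q,R$‑counts, or even $P$‑count and odd $Q,R$‑counts; summing the corresponding binomial‑parity probabilities gives $\E[\IP(\mathbf{u}',\mathbf{v}')]=\tfrac{\N}{4}\bigl(1-(1-\tfrac{2}{\N})^{\|\mathbf{u}\|_1}-(1-\tfrac{2}{\N})^{\|\mathbf{v}\|_1}+(1-\tfrac{2}{\N})^{\dH(\mathbf{u},\mathbf{v})}\bigr)=\IP(\mathbf{u},\mathbf{v})+O(\tb^2/\N)$, confirming that the scheme is essentially unbiased.)

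Applying the collision bound to $S_{\mathbf{u}}$ and $S_{\mathbf{v}}$ (each of size at most $\tb$) and to $S_{\mathbf{u}}\triangle S_{\mathbf{v}}$ (size at most $2\tb$), and plugging into the identity above, gives the deterministic sandwich
\[
\IP(\mathbf{u},\mathbf{v})-Z_{S_{\mathbf{u}}}-Z_{S_{\mathbf{v}}}\ \le\ \IP(\mathbf{u}',\mathbf{v}')\ \le\ \IP(\mathbf{u},\mathbf{v})+Z_{S_{\mathbf{u}}\triangle S_{\mathbf{v}}}.
\]
Hence it suffices to show that for a fixed pair, with probability at least $1-1/n^{3}$, each of the three collision counts is at most $\epsilon\cdot\IP(\mathbf{u},\mathbf{v})$ — and, exactly as in Theorem~\ref{theorem:compressionHamming}, it is in fact enough that they be at most $\epsilon r$, with $r$ the scale at which the guarantee is invoked. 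A union bound over the at most $3\binom{n}{2}$ such events then yields the theorem.

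For the tail bound I would use that the bucket‑occupancy counts $N_1,\dots,N_{\N}$ are negatively associated, so $Z_S=\sum_j\binom{N_j}{2}$ — a monotone function of them — obeys a Chernoff‑type upper‑tail inequality, with mean $\binom{m}{2}/\N=O(\tb^2/\N)$. (A coarser substitute is McDiarmid's inequality applied to $h$ restricted to $S$, each coordinate perturbing $Z_S$ by $O(1)$.) The case split on $\epsilon r$ then enters as in the Hamming theorem: when $\epsilon r>3\log n$ we take $\N=O(\tb^2)$, so each $\E[Z_S]=O(1)$ while the slack $\epsilon r\ge 3\log n$ is far larger, and the Chernoff bound gives failure probability $n^{-\Omega(1)}$, pushed below $1/n^{3}$ by enlarging the constant in $\N$; when $\epsilon r<3\log n$ we instead take $\N=O(\tb^2\log^2 n)$, which depresses every $\E[Z_S]$ by a $\log^2 n$ factor and restores the same tail. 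Combined with the sandwich and the union bound, this gives the $(1\pm\epsilon)$ bound simultaneously for all pairs with probability at least $1-1/n$.

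The main obstacle is precisely this last step: obtaining a high‑probability — not merely in‑expectation — control of the collision counts that survives a union bound over $\Theta(n^2)$ pairs while keeping $\N$ free of any polynomial dependence on $n$. The in‑expectation estimate $\E[Z_S]=O(\tb^2/\N)$ is immediate, but Markov's or Chebyshev's inequality is far too weak here, so the crux is the negative‑association/Chernoff argument together with the correct accounting of the two regimes of $\epsilon r$, which is also what forces the extra $\log^2 n$ factor in the compression length when $\epsilon r$ is small.
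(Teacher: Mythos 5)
Your overall architecture matches the paper's: Theorem~\ref{theorem:compressionIP} is obtained there as a corollary of Lemma~\ref{lem:compressionBound}, i.e.\ by showing that for every pair the damage done by the hash is controlled by collisions among the $O(\tb)$ active coordinates, proving a per-pair tail bound of order $n^{-3}$, and union bounding over ${n \choose 2}$ pairs with a case split on $\epsilon r$. Your deterministic sandwich $\IP(\mathbf{u},\mathbf{v})-Z_{S_{\mathbf{u}}}-Z_{S_{\mathbf{v}}}\le\IP(\mathbf{u}',\mathbf{v}')\le\IP(\mathbf{u},\mathbf{v})+Z_{S_{\mathbf{u}}\triangle S_{\mathbf{v}}}$ is correct and is in fact a cleaner formalization of what the paper calls ``corrupted positions.'' The gaps are in the probabilistic step.

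First, the tail bound on the collision count. $Z_S=\sum_j{N_j \choose 2}$ is not amenable to the tools you invoke: the summands ${N_j \choose 2}$ can be as large as ${m \choose 2}$, so a Chernoff--Hoeffding bound for negatively associated variables carries a range of $\Theta(\tb^2)$ per term, and the McDiarmid constant is not $O(1)$ but $\Theta(\tb)$ (reassigning one coordinate of $S$ changes $Z_S$ by up to $m-1$). Either way the exponent you get is $\Theta\left((\log n)^2/\tb^2\right)$ or worse, which does not reach $n^{-3}$ once $\tb$ is large; upper tails of collision counts are a genuinely ``polynomial concentration'' problem. The paper's Lemma~\ref{lem:compression} sidesteps this: a deviation of $k$ forces $\Omega(k)$ \emph{pairwise-disjoint} colliding pairs, there are at most $(2\tb)^{O(k)}$ ways to choose such a matching, and each fixed matching collides with probability $\N^{-\Omega(k)}$ by independence, so a plain union bound gives $(2\tb/\sqrt{\N})^{\Omega(k)}$ with no concentration inequality at all. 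You should replace your Chernoff step by this matching argument (your sandwich survives, since the per-bucket ``waste'' is also bounded by a constant times the number of disjoint colliding pairs). Second, your treatment of the regime $\epsilon r<3\log n$ does not work as stated: if, say, $\epsilon r=1$, then ``$Z_S\le\epsilon r$ with probability $1-n^{-3}$'' amounts to having no collision at all, which forces $\N=\Omega(\tb^2 n^3)$; shrinking $\E[Z_S]$ by a $\log^2 n$ factor lowers the mean but does not raise the exponent of the tail. The paper's fix is a replication trick---duplicate every coordinate $3\log n$ times, so that all Hamming distances and inner products, and hence the tolerable additive error, scale up to $\Omega(\log n)$, after which the first regime's argument applies with $\N=O(\tb^2\log^2 n)$. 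This step is essential and is missing from your proposal.
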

In the following theorem, we strengthen our result of Theorem~\ref{theorem:compressionHamming}, 
and shows a compression bound which is independent of the  dimension 
and the sparsity, but depends only on the Hamming distance between 
the vectors. However, we could show our result in the Expectation, and only for a pair of vectors.
\begin{theorem}\label{theorem:compressionR}
Consider two binary vectors $\mathbf{\mathbf{u}}, \mathbf{v} \\ \in\{0, 1\}^d$, which get compressed into  
 vectors $\mathbf{\mathbf{u'}}, \mathbf{v'} \in \{0, 1\}^{\N}$  using our Binary Compression Scheme.
  %($\BCS$~\ref{defi:bcs}). 
  If we set  $\N=O(r^2)$, then
 \begin{itemize}
  \item if $\dH(\mathbf{u}, \mathbf{v})< r$, then $\Pr [\dH({\mathbf{u}}', {\mathbf{v}}')< r]=1$, and 
  \item if $\dH(\mathbf{\mathbf{u}}, \mathbf{v})\geq 4r$,  then $\E[\dH(\mathbf{\mathbf{u'}}, \mathbf{v'})]>2r.$
 \end{itemize}
 \begin{comment}
 Consider two binary vectors $\mathbf{\mathbf{u}}, \mathbf{v} \in\{0, 1\}^d$, which get compressed into  
 vectors $\mathbf{\mathbf{u'}}, \mathbf{v'} \in \{0, 1\}^{\N}$  using our Binary Compression Scheme. 
 If we set  $\N=O(r^2)$, then
 \begin{itemize}
  \item if $\dH(\mathbf{u}, \mathbf{v})< r$, then $\Pr [\dH({\mathbf{u}}', {\mathbf{v}}')< r]=1$, and 
  \item if $\dH(\mathbf{\mathbf{u}}, \mathbf{v})\geq 4r$,  then $\E[\dH(\mathbf{\mathbf{u'}}, \mathbf{v'})]>2r.$
 \end{itemize}
 \end{comment}
 \end{theorem}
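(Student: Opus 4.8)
The plan is to exploit that the Binary Compression Scheme is linear over $\mathbb{F}_2$. If $h\colon[d]\to[\N]$ is the random bucket assignment underlying the scheme, then $\mathbf{u'}[j]=\bigoplus_{i:\,h(i)=j}\mathbf{u}[i]$, so for any two vectors one has $\mathbf{u'}\oplus\mathbf{v'}=\mathbf{w'}$, where $\mathbf{w}:=\mathbf{u}\oplus\mathbf{v}$ and $\mathbf{w'}$ is its compression under the same $h$. Hence $\dH(\mathbf{u},\mathbf{v})=\|\mathbf{w}\|_0=:w$ and $\dH(\mathbf{u'},\mathbf{v'})=\|\mathbf{w'}\|_0$, and the whole argument reduces to controlling the Hamming weight of the compression of a single vector of weight $w$.

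For the first bullet I would argue deterministically: a coordinate $\mathbf{w'}[j]$ can equal $1$ only if bucket $j$ receives at least one index from $\mathrm{supp}(\mathbf{w})$, so $\|\mathbf{w'}\|_0$ is at most the number of buckets hit by $\mathrm{supp}(\mathbf{w})$, which is at most $|\mathrm{supp}(\mathbf{w})|=w$. Thus $\dH(\mathbf{u'},\mathbf{v'})\le\dH(\mathbf{u},\mathbf{v})$ always, and $w<r$ forces $\dH(\mathbf{u'},\mathbf{v'})<r$ with probability $1$.

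For the second bullet I would compute the expectation exactly. For $j\in[\N]$ let $X_j$ count the indices of $\mathrm{supp}(\mathbf{w})$ sent to bucket $j$; then $X_j\sim\mathrm{Binomial}(w,1/\N)$ and $\mathbf{w'}[j]=X_j\bmod 2$. From the identity $\E[(-1)^{X_j}]=(1-2/\N)^w$ one gets $\Pr[X_j\text{ odd}]=\frac12\bigl(1-(1-2/\N)^w\bigr)$, so by linearity of expectation (no independence across buckets is needed)
\[
\E[\dH(\mathbf{u'},\mathbf{v'})]=\sum_{j=1}^{\N}\Pr[X_j\text{ odd}]=\frac{\N}{2}\left(1-\left(1-\frac{2}{\N}\right)^{w}\right).
\]
It then suffices to verify that for $w\ge 4r$ and $\N$ a suitable constant times $r^2$ this exceeds $2r$, i.e.\ that $(1-2/\N)^w<1-4r/\N$. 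I would bound $(1-2/\N)^w\le e^{-2w/\N}\le e^{-8r/\N}$, set $x:=8r/\N$, and invoke the elementary inequality $e^{-x}<1-x/2$, valid for $0<x<1$; with, say, $\N=16r^2$ one has $x=1/(2r)\le 1/2$, hence $e^{-8r/\N}<1-4r/\N$, as required.

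The only genuinely delicate step is the last one: fixing the constant in $\N=\Theta(r^2)$ so that the crude estimate $e^{-8r/\N}<1-4r/\N$ holds uniformly in $r$ — everything before it is the $\mathbb{F}_2$-linearity observation and the parity-of-a-binomial identity, both routine. It is also worth noting why the statement is only in expectation and only for a single pair: the bucket-occupancy parities are not independent (they are negatively associated), so concentration does not come for free, and handling $n$ pairs simultaneously would need a union bound that this weaker guarantee does not support.
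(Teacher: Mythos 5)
Your proof is correct, and although it rests on the same central computation as the paper's, it is organized quite differently. The paper also reduces the second bullet to counting \emph{odd} buckets (buckets receiving an odd number of ``unmatched'' positions) and derives the identical parity formula $\Pr[X_j\ \text{odd}]=\frac{1}{2}\left(1-\left(1-\frac{2}{\N}\right)^{w}\right)$ from the binomial distribution; but it then splits into two cases, $\tu<20r$ and $\tu\geq 20r$. For small $\tu$ it falls back on the ``corrupted positions'' union bound of Lemma~\ref{lem:compression} (applied to unmatched rather than active bits) to get $\dH(\mathbf{u'},\mathbf{v'})\geq 3r$ with probability at least $2/3$ and hence expectation at least $2r$, and for large $\tu$ it uses the odd-bucket expectation together with the bound $1-\exp(-5/r)>\frac{1}{2r}$, which it needs $r\geq 2$ for; the two cases even end up with different choices of $\N$ ($4\tu^{2}3^{2/r}$ versus $8r^2$). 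Your route eliminates the case split entirely: by monotonicity of $\frac{\N}{2}(1-(1-2/\N)^{w})$ in $w$ it suffices to treat $w=4r$, and the elementary inequality $e^{-x}<1-x/2$ for $0<x\leq 1$, applied with $x=8r/\N=1/(2r)$ when $\N=16r^2$, gives $\E[\dH(\mathbf{u'},\mathbf{v'})]>2r$ uniformly for all $r\geq 1$ with a single explicit constant. Your linearity-modulo-$2$ framing of the first bullet (the coordinatewise XOR $\mathbf{u'}\oplus\mathbf{v'}$ is the compression of $\mathbf{u}\oplus\mathbf{v}$, whose weight is at most the number of buckets its support hits) is likewise a tidier justification of the one-sided-error property that the paper only asserts in a remark before the theorem. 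In short: the paper's hybrid argument reuses Lemma~\ref{lem:compression}, while yours buys uniformity in $r$, a single value of $\N$, and a shorter, self-contained estimate; both establish the stated claim.
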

\begin{rem}
 To the best of our knowledge, ours is the first efficient binary to binary 
 compression scheme for preserving Hamming distance and Inner product. For 
 Hamming distance in fact our scheme obtains the ``no-false-negative''
 guarantee analogous to the one obtained in recent paper by Pagh~\cite{Pagh16}.
\end{rem}
\begin{rem}
 When $r$ is constant, as mentioned above, %any local projection scheme such as 
 LSH~\cite{GIM99} requires compression 
 length linear in the dimension. However, due to Theorem~\ref{theorem:compressionR}, our compression length 
 is only constant.
\end{rem}
\begin{rem}
 Our compression length is $O(\tb \log^2n)$, which is independent of the dimension $d$; 
 whereas other schemes such as LSH may require the compression length growing linearly in $d$ 
 and the analogue of JL-transform for binary to binary compression requires compression 
 length growing linearly in $n$ (see Lemma~\ref{lem:analogousJL}). 
 %The savings in compression length also translates  to proportionate saving in the compression time.
\end{rem}
\begin{rem} The randomness used by our compression  scheme is $O(d \log \N)$ 
which grows logarithmically in the compression length $N$ whereas  the 
JL-transform uses randomness growing linearly in the compression length. 
For all-pair compression for $n$ data points we use $O(d (\log \tb + \log \log n))$ randomness, 
which grows logarithmically in the sparsity and sub-logarithmically in terms of number of data points. 
\end{rem}
\vspace{-0.2cm}
\subsubsection{For real-valued data}
We generalize our scheme for real-valued data also and obtain compressions for Euclidean distance, 
Inner product, and $k$-way Inner product. We first state our compression scheme as follows:

Given a vector $\textbf{a}\in \R^{d}$, our scheme compress it into a 
$\N$-dimensional vector (say) $\boldsymbol{\alpha}^{\N}$ as follows. 
We randomly map each coordinate position (say) $\{i\}_{i=1}^d$ of the original
data to an integer $\{j\}_{j=1}^{\N}$. To compute the $j$-th coordinate of the 
compressed vector  $\boldsymbol{\alpha}$ we check which coordinates of the original data  have been 
mapped to $j$, we multiply  the numbers located at those positions with a random variable $x_i$, 
 compute their summation, and assign it to  $\boldsymbol{\alpha}[j]$, 
 where $x_i$  takes a value between $\{-1, +1\}$ with probability $1/2.$
 The following figure illustrate an example of the compression. 
  \begin{figure}[ht!]
\centering
\includegraphics[scale=.033]{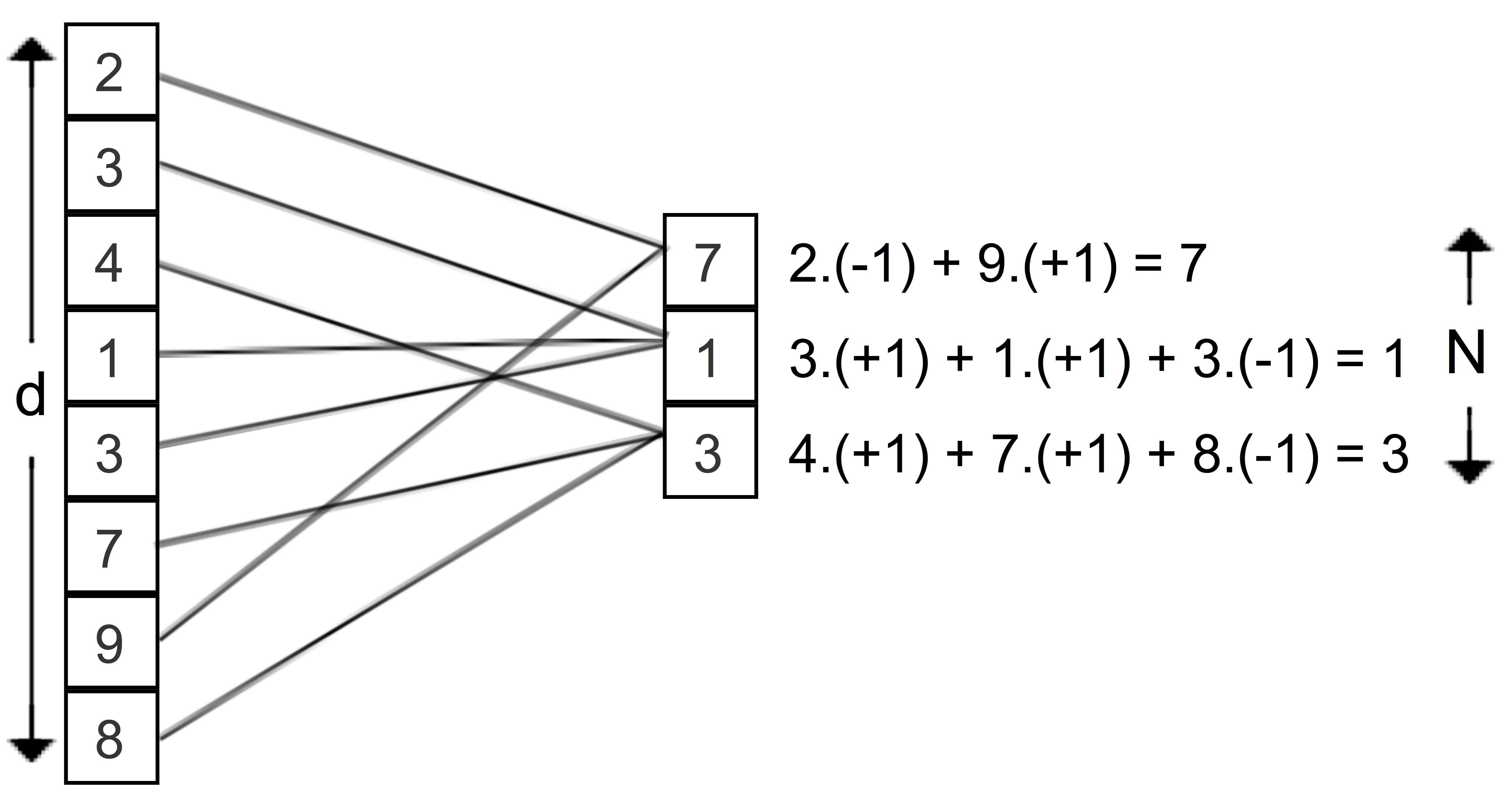}
%\includegraphics[scale=.25]{real.pdf}
%\caption{A simple caption \label{overflow}}
\end{figure}
In the following we present our main result for real valued data which is compression bound for preserving 
$k$-way inner product. For a set of $k$ vectors $\{\boldsymbol{\alpha_i}\}_{i=1}^k\in \R^d$, their 
$k$-way inner product is defined as 
$$\langle \boldsymbol{\alpha}_1\boldsymbol{\alpha}_2\ldots\boldsymbol{\alpha}_k\rangle
=\sum_{j=1}^d\boldsymbol{\alpha}_1[j]\boldsymbol{\alpha}_2[j]\ldots\boldsymbol{\alpha}_k[j],$$
where $\boldsymbol{\alpha}_1[j]$ denote the $j$-th coordinate of the vector $\boldsymbol{\alpha}_1$.
\begin{theorem}\label{theorem:compressionRealKway}
 Consider a set of $k$ vectors $\{\mathbf{a_i}\}_{i=1}^k\in \R^d$, 
  which get compressed into vectors $\{\boldsymbol{\alpha_i}\}_{i=1}^k \in \R^{\N}$   using our Real Compression Scheme. 
   If we set $\N=\frac{10\tm^k}{\epsilon^2}$, 
 where $\tm=\max\{||{\mathbf{a_i}||^2}\}_{i=1}^k$ and $\epsilon>0$, then the following holds 
 \[
 \Pr\left[\left|\langle\boldsymbol{\alpha}_1\boldsymbol{\alpha}_2\ldots\boldsymbol{\alpha}_k\rangle- \langle\mathbf{a}_1\mathbf{a}_2\ldots\mathbf{a}_k\rangle \right|>\epsilon \right]<1/10.
 \]
\end{theorem}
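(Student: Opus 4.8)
The plan is to show that $\langle\boldsymbol{\alpha}_1\ldots\boldsymbol{\alpha}_k\rangle$ is an unbiased estimator of $\langle\mathbf{a}_1\ldots\mathbf{a}_k\rangle$ and then bound its variance, after which Chebyshev's inequality finishes the job. First I would fix the notation: let $h:[d]\to[\N]$ be the random bucket assignment (each coordinate sent to a uniformly random bucket, independently), and let $x_1,\ldots,x_d\in\{-1,+1\}$ be independent Rademacher signs, independent of $h$. Then the compressed $j$-th coordinate of the $\ell$-th vector is $\boldsymbol{\alpha}_\ell[j]=\sum_{i:h(i)=j} x_i\,\mathbf{a}_\ell[i]$. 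Writing $\langle\boldsymbol{\alpha}_1\ldots\boldsymbol{\alpha}_k\rangle=\sum_{j=1}^{\N}\prod_{\ell=1}^k\boldsymbol{\alpha}_\ell[j]$ and expanding each product, the sum becomes $\sum_{i_1,\ldots,i_k\in[d]} \big(\prod_{\ell=1}^k \mathbf{a}_\ell[i_\ell]\big)\,x_{i_1}\cdots x_{i_k}\,\mathbf{1}[h(i_1)=\cdots=h(i_k)]$.

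For the expectation, condition on $h$: the Rademacher expectation of $x_{i_1}\cdots x_{i_k}$ is $1$ if every index value in the multiset $\{i_1,\ldots,i_k\}$ appears an even number of times, and $0$ otherwise. Here is where I would exploit that the ``honest'' term is the fully-diagonal one $i_1=\cdots=i_k$, which automatically lands in the same bucket and contributes $\prod_\ell \mathbf{a}_\ell[i]$, i.e. exactly $\langle\mathbf{a}_1\ldots\mathbf{a}_k\rangle$ after summing over $i$; all other surviving Rademacher terms require a repeated index, hence involve at most $k-1$ distinct coordinates, and taking the expectation over $h$ they are damped by factors of $1/\N$. I expect that a careful accounting shows these off-diagonal contributions have zero mean when $k$ is such that no nontrivial even-multiplicity pattern survives — or, more robustly, I would not claim exact unbiasedness but rather bound the bias by $O(\tm^{k/2}/\N)$ which is $o(\epsilon)$ for the stated $\N$; the cleanest route is to check that for each surviving pattern with a strict repeat, the $h$-probability of coincidence carries a compensating $1/\N$ factor while the coefficient is controlled by Cauchy–Schwarz against $\prod_\ell\|\mathbf{a}_\ell\|^2\le\tm^k$.

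The variance bound is the technical heart. I would write $\Var[\langle\boldsymbol{\alpha}_1\ldots\boldsymbol{\alpha}_k\rangle]=\E[(\cdot)^2]-(\E[\cdot])^2$ and expand the square as a sum over two $k$-tuples of indices $(i_1,\ldots,i_k)$ and $(i'_1,\ldots,i'_k)$; the joint Rademacher expectation is again an even-multiplicity indicator, now on the combined multiset, and the joint $h$-expectation contributes $\N^{-(c-1)}$ where $c$ is the number of distinct bucket-classes forced. Grouping terms by their index-coincidence pattern, the dominant surviving terms are those with $i_\ell=i'_\ell$ for all $\ell$ (giving $\sum_i \prod_\ell \mathbf{a}_\ell[i]^2 \le \prod_\ell \|\mathbf{a}_\ell\|^2 \le \tm^k$, weighted by a single $1/\N$-free class) and the other paired patterns, each of which is bounded by $\tm^k$ times a power of $1/\N$. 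The upshot is $\Var \le C\,\tm^k/\N$ for an absolute constant $C$ (I would aim for $C\le 10$ to match the $\N=10\tm^k/\epsilon^2$ choice), so by Chebyshev $\Pr[|\langle\boldsymbol{\alpha}_1\ldots\boldsymbol{\alpha}_k\rangle-\langle\mathbf{a}_1\ldots\mathbf{a}_k\rangle|>\epsilon] \le \Var/\epsilon^2 \le C\,\tm^k/(\N\epsilon^2) = C/10 \le 1/10$ once the constant is tuned.

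The main obstacle I anticipate is the bookkeeping in the variance expansion: classifying all index-coincidence patterns between the two $k$-tuples, verifying that every pattern other than the diagonal one is suppressed by at least one factor of $1/\N$, and bounding each pattern's coefficient by $\tm^k$ via repeated Cauchy–Schwarz (so that the $d$-dependence and the per-vector norms collapse into $\tm^k$). A secondary subtlety is making the unbiasedness (or the bias control) airtight for general $k$, since for even $k$ there can be nonzero-mean off-diagonal Rademacher terms; handling the bias as a lower-order error term absorbed into the constant, rather than insisting on exact unbiasedness, is the safe way to keep the argument clean.
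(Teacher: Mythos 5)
Your plan (expand the $k$-way product over index tuples, compute the mean, bound the variance by $O(\tm^k/\N)$, finish with Chebyshev) is exactly the route the paper intends: the paper gives no separate proof of Theorem~\ref{theorem:compressionRealKway} at all, saying only that it follows by ``a similar analysis'' to the pairwise case of Lemma~\ref{lem:innerprod}, whose proof is the $k=2$ instance of your computation. So in spirit you are reproducing the paper's argument, and you are right that the real work --- which the paper skips --- is classifying the index-coincidence patterns, checking that every non-diagonal surviving pattern picks up at least one factor of $1/\N$, and controlling each pattern's coefficient by H\"older/Cauchy--Schwarz against $\prod_\ell \|\mathbf{a}_\ell\|^2 \le \tm^k$.

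There is, however, a genuine gap that your write-up glosses over and that actually breaks the argument (and arguably the theorem as stated) for odd $k$. The scheme uses a \emph{single} sign vector $x_1,\dots,x_d$ shared by all $k$ vectors, so the fully diagonal term $i_1=\cdots=i_k=i$ carries the factor $x_i^k$, which equals $1$ only when $k$ is even; for odd $k$ it equals $x_i$, whose expectation is $0$. More generally, for odd $k$ no multiset $\{i_1,\dots,i_k\}$ can have all multiplicities even, so $\E[\langle\boldsymbol{\alpha}_1\cdots\boldsymbol{\alpha}_k\rangle]=0$ identically, not $\langle\mathbf{a}_1\cdots\mathbf{a}_k\rangle$, and Chebyshev then concentrates the estimator around $0$ rather than around the target. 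Your sentence asserting that the diagonal ``contributes $\prod_\ell \mathbf{a}_\ell[i]$'' silently assumes $x_i^k=1$. For even $k$ your plan is sound: the even-multiplicity off-diagonal terms do create a bias, but each involves at least two distinct coordinates forced into one bucket and is therefore damped by $1/\N$, so the bias is $O(\tm^{k/2}/\N)=o(\epsilon)$ for the stated $\N$ and can be absorbed into Chebyshev as you suggest. One further caveat: the number of surviving coincidence patterns in the variance expansion grows with $k$, so the constant in $\Var\le C\tm^k/\N$ is really $C_k$, and the fixed prefactor $10$ in $\N=10\tm^k/\epsilon^2$ cannot be taken as a universal constant for all $k$; this looseness is present in the paper's statement as well.
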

%\begin{rem}
%For all-pair compression for $n$ data points our scheme presented above requires the compression length $N$ that grows polynomially in $n$. 
%By allowing more randomness we can in fact obtain the compression length growing logarithmically in $n$. This would require randomness growing linearly in $N$
%whereas our current scheme uses randomness growing logarithmically in $N$. It is plausible that using concentration results (such as Martinagale) 
%one could obtain compression length growing logarithmically in $n$ using randomness that grows logarithmically in $N$.
%\end{rem}
\begin{rem}
 An advantage of our compression scheme is that it can be constructed in the
 streaming model quite efficiently. The only requirement is that in the  case of 
 binary data the maximum number of $1$ the  vectors in the stream should be bounded, 
 and in the case of real valued data norm of the vectors should be bounded. 
\end{rem}
\begin{comment}
\begin{rem}
 As our compression scheme provides a similar guarantee as of 
 Definition~\ref{definition:LSH}, then using results 
 of  Indyk and Motwani~\cite{IM98} one can construct data structure for LSH 
 %(hashtables with parameters $K$ and $L$) 
 for approximate nearest neighbor problem. 
 Thus, our result leads to efficient approximate search for high dimension sparse data.
\end{rem}
\end{comment}
\vspace{-0.2cm}
\subsection{Comparison with previous work}
A major advantage of our compression scheme is that it provides a one-shot solution for different similarity 
measures -- Binary compression scheme preserves both Hamming distance and Inner product, 
and Real valued data compression scheme preserves both Euclidean distance,  Inner product, and $k$-way Inner product. 
The second main advantage of our compression scheme for binary data   it gives a binary to binary 
compression as opposed to the binary to real compression by JL-transform. 
Third main advantage is that our compression scheme is that its compression 
size is independent of the dimensions and depends only on the sparsity as opposed to 
Gionis, Indyk, Motwani~\cite{GIM99} scheme which requires linear size compression. For real-valued 
data our results are weaker compared to previous known works but they generalize to $k$-way 
inner product, which none of the previous work does.  
Another advantage of our real valued compression scheme is that 
when the number of points are small (constant),  then for preserving 
a pairwise Inner Product  or Euclidean distance, 
we have a clear advantage on  the amount of  randomness required for the 
compression, 
the randomness required by our scheme 
%as our scheme requires $O(d\log \N)$ randomness, 
 grows logarithmically in the compression length, whereas  the 
%JL-transform uses randomness growing linearly in the compression length
other schemes require  randomness which  grows linearly in the compression length.
%\vspace{-2cm}
\begin{comment}
-- even sparse representation in indyk motwani requires $O(t log d)$ bits also does not give no-false-negative guarantee
-- comparing with pagh -- similar to 3.1 in pagh. and can get theorem 1.1
space: ours = $O(r^2 n)$ compared to $O(2^r n)$ by Pagh
time: ours = $O(2^{r^2})$ compared to $O(2^r)$
\end{comment}
%\vspace{-0.cm}
\subsection*{Potential applications}
%\paragraph{Efficient search via composing with LSH}
A potential use  of our result is to improve approximate nearest neighbor search via composing with LSH. 
Due to the ``curse of dimensionality'' many search algorithms scale poorly in high dimensional data. 
So, if it is possible to get a succinct compression of data while preserving the similarity 
score between pair of data points, then such compression naturally helps for efficient  search.  
One can first compress the input such that it preserve the desired 
similarity measure, and then  can apply a collision based hashing algorithm such as 
LSH~\cite{GIM99, IM98} for efficient approximate nearest neighbor ($c$-$\NN$)  on the compressed data. 
As our compression scheme provides a similar guarantee as of 
 Definition~\ref{definition:LSH}, then %using results  of  Indyk and Motwani~\cite{IM98} 
 one can construct data structure for LSH 
 %(hashtables with parameters $K$ and $L$) 
 for approximate nearest neighbor problem.
Thus, our similarity preserving compression scheme leads to an efficient approximate nearest neighbor search. 

There are many similarity based algorithmic methods used in large scale learning and information retrieval, 
e.g., Frequent itemset mining~\cite{AgrawalS94}, ROCK clustering \cite{ROCK}. One could potentially obtain algorithmic 
speed up in these methods via our compression schemes. 
%Approximate similarity search  and near neighbor search can be made space and time efficient by applying LSH (or any other method) on top of our compression scheme.  
Recently compression based on LSH for inner-product is used to speed up the  forward and back-propagation 
in neural networks \cite{deeplearning}. One could potentially use our scheme to take advantage of sparsity and 
obtain further speed up.
\begin{comment}
\begin{itemize}
 \item Speeding up  ROCK clustering 
 \item Speeding   up   scoring   of   L2 regularized   SVM/Neural   Networks 
 \item Fast   search   for   documents   relevant   to   short   comments 
\item Fast search for near duplicate in sparse images.
\item Space efficient   near   neighbor   search   for   sparse   data 
\end{itemize}
\end{comment}
\vspace{-0.4cm}
\subsection*{Organization of the paper}
In Section~\ref{sec:Background}, we present the necessary background which 
helps to understand the paper. In Section~\ref{sec:BinaryResult}, we present 
our compression scheme for high dimensional sparse binary data. 
 In Section~\ref{sec:RealResult}, we present our compression scheme for high 
 dimensional sparse real data.  Finally in Section~\ref{sec:Conclusion}, we 
 conclude our discussion, and state some possible extensions of the work.

\section{Background}\label{sec:Background}
\begin{tabular}{|c|l|}
\hline
 \multicolumn{2}{|c|}{\bf Notations}\\
 \hline
 %$d$ & dimension of the input vectors\\
 %\hline
 $\N$ & number of coordinates/bit positions\\ &in the compressed data \\
 \hline
$\tb$ & upper bound on the number of $1$'s in \\&any binary vector.\\
\hline
$\tm$ & upper bound on the norm of any \\&real-valued vector.\\
\hline
$||\mathbf{a}||$ & $l_2$ norm of the vector $\mathbf{a}$\\
\hline
$\mathbf{a}[i]$ & $i$-th bit position (coordinate) of  \\&binary (real-valued) vector $\mathbf{a}$ .\\
\hline
 $\dH(\mathbf{u}, \mathbf{v})$& Hamming distance between binary\\& vectors $\mathbf{u}$ and $\mathbf{v}.$\\
 \hline
$\IP(\mathbf{a}, \mathbf{b})$ & Inner product between binary/\\&real-valued vectors $\mathbf{a}$ and $\mathbf{b}.$\\
\hline
 \end{tabular}
 \subsection{Probability background}
 \begin{definition}\label{definition:varDef}
  The Variance of a random variable $X$, denoted $\Var(X)$, is defined as the expected value of the squared deviation 
  of $X$ from its mean. 
  \[
   \Var(X)=\E[(X-\E(X))^2]=\E(X^2)-\E(X)^2.
  \]
 \end{definition}
 
 \begin{definition}\label{definition:coVarDef}
 Let $X$ and $Y$ be jointly distributed random variables. The \textit{Covariance} of $X$
and $Y$, denoted $\Cov(X, Y)$, is defined as
  \[
   \Cov(X, Y)=\E[(X-\E(X))(Y-\E(Y))]. 
  \]
 \end{definition}

 \begin{fact}\label{fact:varProp}
  Let $X$ be a random variable and $\lambda$ be a constant. Then,
$\Var(\lambda+X)=\Var(X)$ and $\Var(\lambda X)=\lambda^2\Var(X).$
 \end{fact}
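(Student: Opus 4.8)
The plan is to derive both identities directly from the definition $\Var(X)=\E[(X-\E(X))^2]$ together with linearity of expectation, which is what governs how the mean transforms under each of the two operations. The key observation in both cases is that the \emph{centered} random variable $X-\E(X)$ transforms in a very simple way: adding a deterministic constant leaves it unchanged, while scaling by $\lambda$ simply multiplies it by $\lambda$. Once the centered variable is pinned down, the variance follows immediately by squaring and taking the expectation.

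For the first identity I would set $Y=\lambda+X$ and begin by computing its mean. Since $\lambda$ is a constant, linearity gives $\E(Y)=\lambda+\E(X)$, so the additive constant appears identically in both $Y$ and $\E(Y)$ and therefore cancels in the deviation: $Y-\E(Y)=(\lambda+X)-(\lambda+\E(X))=X-\E(X)$. Substituting this into the definition yields $\Var(\lambda+X)=\E[(X-\E(X))^2]=\Var(X)$, as claimed.

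For the second identity I would set $Z=\lambda X$ and again compute the mean first: linearity gives $\E(Z)=\lambda\,\E(X)$. Hence the deviation factors out the constant, $Z-\E(Z)=\lambda X-\lambda\,\E(X)=\lambda(X-\E(X))$, and squaring produces a factor $\lambda^2$. Pulling this deterministic factor outside the expectation (once more by linearity) gives $\Var(\lambda X)=\E[\lambda^2(X-\E(X))^2]=\lambda^2\,\E[(X-\E(X))^2]=\lambda^2\Var(X)$.

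There is no genuine obstacle here; the only points needing care are that every step of the form ``split the mean'' or ``pull the constant out'' is an instance of linearity of expectation, and that $\lambda$ being deterministic is exactly what licenses $\E(\lambda)=\lambda$ and the free passage of $\lambda$ through the expectation operator. One could equally well argue from the computational form $\Var(X)=\E(X^2)-\E(X)^2$, but the centered-deviation route above is cleaner since it avoids expanding the square of a sum.
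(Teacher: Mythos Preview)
Your proof is correct and entirely standard. Note, however, that the paper does not actually prove this statement: it is listed as a \textbf{Fact} in the probability background section and is quoted without proof as a well-known identity, so there is no ``paper's own proof'' to compare against. Your argument via the centered deviation $X-\E(X)$ is the cleanest justification and would serve perfectly well if a proof were required.
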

 
 \begin{fact}\label{fact:varProp1}
  Let $X_1, X_2, \ldots, X_n$ be a set of $n$ random variables. Then, 
\[
 \Var\left(\sum_{i=1}^n X_i \right)=\sum_{i=1}^n\Var\left( X_i \right)+\sum_{i \neq j} \Cov(X_i, X_j).
\]
\end{fact}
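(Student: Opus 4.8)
The plan is to reduce everything to the centered variables $Y_i := X_i - \E(X_i)$ and then expand a single square. First I would set $S := \sum_{i=1}^n X_i$ and invoke linearity of expectation to write $\E(S) = \sum_{i=1}^n \E(X_i)$, so that $S - \E(S) = \sum_{i=1}^n Y_i$ with each $Y_i$ centered. This is the only preliminary fact needed, and it lets me rewrite the variance purely in terms of the $Y_i$.

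Next I would apply Definition~\ref{definition:varDef} in its first form to obtain
\[
\Var(S) = \E\bigl[(S - \E(S))^2\bigr] = \E\Bigl[\Bigl(\sum_{i=1}^n Y_i\Bigr)^2\Bigr],
\]
and then expand the square algebraically as
\[
\Bigl(\sum_{i=1}^n Y_i\Bigr)^2 = \sum_{i=1}^n Y_i^2 + \sum_{i \neq j} Y_i Y_j.
\]
Taking expectations of both sides and using linearity of expectation term by term gives
\[
\Var(S) = \sum_{i=1}^n \E\bigl[Y_i^2\bigr] + \sum_{i \neq j} \E\bigl[Y_i Y_j\bigr].
\]

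Finally I would identify each expectation with the quantity claimed. Since $Y_i = X_i - \E(X_i)$, Definition~\ref{definition:varDef} yields $\E[Y_i^2] = \E[(X_i - \E(X_i))^2] = \Var(X_i)$, and Definition~\ref{definition:coVarDef} yields $\E[Y_i Y_j] = \E[(X_i - \E(X_i))(X_j - \E(X_j))] = \Cov(X_i, X_j)$. Substituting these into the displayed expression produces exactly
\[
\Var\Bigl(\sum_{i=1}^n X_i\Bigr) = \sum_{i=1}^n \Var(X_i) + \sum_{i \neq j} \Cov(X_i, X_j),
\]
as desired.

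This is an elementary identity, so there is no genuine obstacle; the only points requiring a little care are the correct combinatorial expansion of the square (keeping the diagonal terms $Y_i^2$ separate from the off-diagonal cross terms $Y_iY_j$ with $i \neq j$) and the two applications of linearity of expectation, once to center the sum and once to distribute the expectation across the expanded square. Everything else is a direct substitution from Definitions~\ref{definition:varDef} and~\ref{definition:coVarDef}.
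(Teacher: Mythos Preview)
Your argument is correct and is the standard derivation of this identity. The paper itself does not prove this statement; it is listed in the background section as a \emph{Fact} (alongside the definitions of variance and covariance and Chebyshev's inequality) and is simply quoted without proof, so there is nothing to compare against beyond noting that your proof supplies exactly the elementary expand-the-square justification one would expect.
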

\begin{fact}\label{fact:coVarProp}
  Let $X$ and $Y$ be a pair of random variables and $\lambda$ be a constant. Then,
$\Cov(\lambda X, \lambda Y)=\lambda^2\Cov(X, Y).$
 \end{fact}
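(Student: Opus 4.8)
The plan is to argue directly from the definition of covariance (Definition~\ref{definition:coVarDef}) together with the linearity of expectation, treating $\lambda$ as a constant that can be pulled through the expectation operator. No concentration or independence assumptions are needed; this is a purely algebraic identity about how $\Cov$ transforms under a common scalar rescaling of both arguments.

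First I would write out $\Cov(\lambda X, \lambda Y)$ by substituting $\lambda X$ and $\lambda Y$ into Definition~\ref{definition:coVarDef}, obtaining
\[
 \Cov(\lambda X, \lambda Y)=\E\big[(\lambda X-\E(\lambda X))(\lambda Y-\E(\lambda Y))\big].
\]
The key step is then to apply linearity of expectation to the inner mean terms, namely $\E(\lambda X)=\lambda\E(X)$ and $\E(\lambda Y)=\lambda\E(Y)$, so that each factor acquires a common $\lambda$: the first factor becomes $\lambda(X-\E(X))$ and the second becomes $\lambda(Y-\E(Y))$. This is the only substantive move in the argument, and it is exactly the analogue of the scaling step already used for $\Var(\lambda X)=\lambda^2\Var(X)$ in Fact~\ref{fact:varProp}.

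Next I would factor the two copies of $\lambda$ out of the product, giving $\lambda^2(X-\E(X))(Y-\E(Y))$ inside the expectation, and then pull the constant $\lambda^2$ outside the expectation by linearity once more. What remains inside is precisely $\E\big[(X-\E(X))(Y-\E(Y))\big]=\Cov(X, Y)$, yielding $\Cov(\lambda X, \lambda Y)=\lambda^2\Cov(X, Y)$ as claimed.

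I do not expect any genuine obstacle here: the statement is a routine consequence of Definition~\ref{definition:coVarDef} and linearity of expectation, and the entire chain of equalities fits on a single display. The only point worth stating explicitly is that $\lambda$ being a (deterministic) constant is what permits both the identity $\E(\lambda X)=\lambda\E(X)$ and the final extraction of $\lambda^2$; if $\lambda$ were itself random the identity would fail, so I would flag that the proof uses constancy of $\lambda$ in exactly these two places.
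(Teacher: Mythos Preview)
Your proof is correct; it is the standard two-line derivation from Definition~\ref{definition:coVarDef} and linearity of expectation. The paper itself states this as a background fact without proof, so there is no alternative approach to compare against.
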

%\subsubsection{Concentration inequaliti}
\begin{fact}[Chebyshev's inequality]\label{fact:Chebyshev}
  Let $X$  be a  random variable having finite mean and finite non-zero variance $\sigma^2$. Then for any real number
  $\lambda>0,$
  \[
   \Pr[|X-\E(X)|\geq\lambda \sigma] \leq \frac{1}{\lambda^2}.
  \]
 \end{fact}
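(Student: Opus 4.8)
The plan is to derive Chebyshev's inequality from Markov's inequality applied to the squared deviation of $X$ from its mean. Since Markov's inequality is not among the facts stated above, I would first establish it as a short self-contained step and then specialize it to the random variable $(X-\E(X))^2$.

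First I would prove the Markov bound: for any nonnegative random variable $Y$ and any real $a>0$, $\Pr[Y\geq a]\leq \E(Y)/a$. The key observation is the pointwise inequality $a\cdot\mathbf{1}[Y\geq a]\leq Y$, which holds because on the event $\{Y\geq a\}$ the left side equals $a\leq Y$, while on its complement the left side is $0\leq Y$ (using $Y\geq 0$). Taking expectations of both sides, then invoking linearity of expectation together with $\E(\mathbf{1}[Y\geq a])=\Pr[Y\geq a]$, yields $a\,\Pr[Y\geq a]\leq \E(Y)$; dividing by $a>0$ gives the claim.

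Next I would apply this to $Y:=(X-\E(X))^2$, which is nonnegative and, by Definition~\ref{definition:varDef}, has expectation $\E(Y)=\Var(X)=\sigma^2$. The crucial step linking the two formulations is that, since $\lambda\sigma\geq 0$, the two events coincide: $|X-\E(X)|\geq\lambda\sigma$ holds if and only if $(X-\E(X))^2\geq\lambda^2\sigma^2$, because squaring is monotone on the nonnegative reals. Hence, setting $a=\lambda^2\sigma^2$ in the Markov bound,
\[
\Pr[|X-\E(X)|\geq\lambda\sigma]=\Pr[Y\geq\lambda^2\sigma^2]\leq\frac{\E(Y)}{\lambda^2\sigma^2}=\frac{\sigma^2}{\lambda^2\sigma^2}=\frac{1}{\lambda^2},
\]
which is exactly the desired bound. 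The hypothesis that the variance is finite and non-zero is precisely what guarantees that $a=\lambda^2\sigma^2$ is a strictly positive finite quantity, so that the division by $\sigma^2$ is legitimate and $\E(Y)$ is finite.

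There is no serious obstacle here, as this is a classical result; the only point requiring care is that the argument cannot cite Markov's inequality directly (it is not listed among the preceding facts), so the self-contained proof must include the indicator-function bound that underlies it. I would also be mindful to invoke the non-zero variance assumption explicitly at the division step, and to record that the event equivalence relies on $\lambda\sigma$ being nonnegative.
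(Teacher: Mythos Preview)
Your proposal is correct and is the standard textbook derivation of Chebyshev's inequality via Markov's inequality applied to $(X-\E(X))^2$. The paper, however, does not prove this statement at all: it is listed in the background section as a \emph{Fact} and simply cited where needed, so there is no proof in the paper to compare against.
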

 \begin{comment}
 \begin{fact}[Bernstein's inequality]\label{fact:Bernstein}
  Let $X_1, X_2, \ldots, X_n$ be a set of $n$ independent real-valued random variables and let 
  $\sigma^2=\frac{1}{n}\sum_{i=1}^n \Var(X_i)$. Then for any real number   $\epsilon>0$, 
  \[
   \Pr\left[\left|\sum_{i=1}^n X_i-\E\left(\sum_{i=1}^n X_i\right)\right|\geq\epsilon\right] \leq 2\exp\left(-\frac{n\epsilon^2}{2(\sigma^2+\epsilon/3)}   \right).
  \]
 \end{fact}
\end{comment}

 %In what follows we  consider a few important similarity measures, and their respective compression schemes.
 \subsection{Similarity measures and their respective compression schemes}
\paragraph{Hamming distance} Let $\mathbf{u}, \mathbf{v}\in \{0, 1\}^d$ be two binary vectors, then  the 
Hamming distance between these two vectors is the number of bit positions where they differ. 
 To the best of 
our knowledge, there does not exist any non-trivial compression scheme which provide  
similar compression guarantees such as JL-lemma provides for Euclidean distance. 
In the following lemma, we show that for a set of $n$-binary vectors an analogous JL-type binary to binary 
compression (if it exist) may require compression length linear in $n$. 
Further collision \footnote{A collision occurs when two object hash to the same hash value.} based hashing scheme  such as LSH (due to  
Gionis \textit{et al.}~\cite{GIM99}, see Subsection~\ref{subsection:subsecLSH}) can be considered as a  
binary to binary compression scheme, where the size of hashtable determines the compression-length. Their techniques includes randomly choosing bit positions and checking if the query and input vectors are matching exactly at those bit positions.
\begin{lem}\label{lem:analogousJL}
 Consider a set of $n$-binary vectors, then  an analogous JL-type binary to binary compression (if it exist) 
 may require compression length linear in $n$.
\end{lem}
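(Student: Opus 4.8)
The plan is to produce a single explicit family of $n$ binary vectors on which \emph{every} JL-type binary-to-binary compression is forced to use $\N\geq n-1$ coordinates. By ``JL-type binary-to-binary compression'' I mean any (possibly randomized) map $\phi$ that sends a given set of $n$ vectors into $\{0,1\}^{\N}$ so that every pairwise Hamming distance is preserved up to a factor $(1\pm\epsilon)$ with positive probability, for a small constant $\epsilon<1/2$ (this is exactly the analogue, for $\dH$, of the multiplicative guarantee of JL). The first step is the standard derandomization remark: since the guarantee holds on our chosen instance with probability $>0$, there is a \emph{deterministic} map with that property on the instance, so it suffices to rule out deterministic compressions.

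For the hard instance I would take the $n$ standard basis vectors $\mathbf{e}_1,\dots,\mathbf{e}_n\in\{0,1\}^{n}$ (so here $d=n$). These satisfy $\dH(\mathbf{e}_i,\mathbf{e}_j)=2$ for all $i\neq j$. Now, for $\epsilon<1/2$ the compressed distance $\dH(\phi(\mathbf{e}_i),\phi(\mathbf{e}_j))$ is a nonnegative integer lying in the open interval $\bigl((1-\epsilon)\cdot 2,\ (1+\epsilon)\cdot 2\bigr)\subseteq(1,3)$, hence it equals exactly $2$. Consequently $\phi(\mathbf{e}_1),\dots,\phi(\mathbf{e}_n)$ are $n$ points of $\{0,1\}^{\N}$ that are pairwise at Hamming distance exactly $2$; in particular they are pairwise distinct.

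The final step is a regular-simplex bound. Regarding each $\phi(\mathbf{e}_i)$ as a point of $\R^{\N}$ and using the elementary identity $\|\mathbf{x}-\mathbf{y}\|_2^{2}=\dH(\mathbf{x},\mathbf{y})$ valid for binary $\mathbf{x},\mathbf{y}$, the points $\phi(\mathbf{e}_1),\dots,\phi(\mathbf{e}_n)$ are pairwise at the same Euclidean distance $\sqrt{2}$, i.e.\ they form the vertices of a regular simplex in $\R^{\N}$. Translating so that $\phi(\mathbf{e}_1)$ sits at the origin, the Gram matrix of the $n-1$ vectors $\phi(\mathbf{e}_i)-\phi(\mathbf{e}_1)$, $i=2,\dots,n$, has off-diagonal entries $\tfrac12(2+2-2)=1$ and diagonal entries $2$, i.e.\ it is $I+J$ with $J$ the all-ones matrix, which is positive definite; hence these $n-1$ vectors are linearly independent in $\R^{\N}$, which forces $\N\geq n-1$. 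This is the asserted linear-in-$n$ lower bound.

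The one point that needs care --- and the place I would expect the only friction --- is pinning down exactly what ``JL-type'' should mean: the argument above uses that the distortion $\epsilon$ is small enough to pin every compressed distance to the exact integer value $2$. For a larger constant distortion one loses exact equidistance, and the clean simplex bound must be replaced by an anticode/diametric estimate (e.g.\ a binary code of diameter $D$ in $\{0,1\}^{\N}$ has size $O(\N^{\lfloor D/2\rfloor})$), which still yields a lower bound $\N=\Omega(n^{c})$; but in the regime that actually mirrors JL ($\epsilon\to 0$) the elementary simplex argument already gives the clean linear bound, so I would state the lemma for $\epsilon<1/2$ and mention the general-$\epsilon$ variant as a remark.
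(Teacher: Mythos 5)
Your proof is correct, but it takes a genuinely different route from the paper's. The paper adjoins the zero vector $e_0$ to the $n$ standard basis vectors and runs a packing argument: every image $f(e_i)$ must lie in a Hamming ball of radius $1+\varepsilon$ around $f(e_0)$ while all images remain distinct, so the ball must contain at least $n$ points, which forces $k=\Omega(n^{1/(1+\varepsilon)})$ (and, since integer distances pin the radius to $1$, in fact $k\geq n-1$). You instead drop the zero vector, use the small distortion to pin every compressed pairwise distance to exactly $2$, and then invoke a rank argument: the identity $\|\mathbf{x}-\mathbf{y}\|_2^2=\dH(\mathbf{x},\mathbf{y})$ turns the images into a regular simplex, whose Gram matrix $I+J$ is positive definite, giving $n-1$ linearly independent vectors and hence $\N\geq n-1$. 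Your version buys a clean, exactly linear bound without any auxiliary reference point and makes explicit where the small-$\epsilon$ assumption is used (the paper's stated bound is actually sublinear for fixed $\varepsilon>0$ and only becomes linear as $\varepsilon\to 0$, a point the paper glosses over); the paper's counting approach is more elementary and degrades more gracefully to larger distortion, where your own fallback would be the anticode estimate you mention. Both arguments are sound; yours is arguably the tighter and more honest formalization of the lemma as stated.
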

\begin{proof}
  Consider a set of  $n$ binary vectors $\{e_i\}_{i=1}^n$ -- standard unit vectors, and the zero vector $e_0$. 
  The Hamming distance between $e_0$ and any $e_i$ is $1$, and the Hamming distance between any pair of vectors 
  $e_i$ and $e_j$ for $i \neq j$ is $2$. 
  Let   $f$ be a map which map these points into binary vectors of dimension $k$ by preserving the distance between 
  any pair of vectors within  a factor of $1\pm \varepsilon$, for a parameter $\varepsilon>0$. Thus, these $n$ points $\{f(e_i)\}_{i=1}^n$ are within a
  distance at most $(1+\varepsilon)$ from $f(e_0)$, and any two points $f(e_i)$ and $f(e_j)$ for $i \neq j$ are at distance 
  at least $2(1-\varepsilon)$. However, the total number of  points at distance at most $(1+\varepsilon)$ from  
  $f(e_0)$ is $O(k^{1+\varepsilon})$, and distance between any  two points $f(e_i)$ and $f(e_j)$ for $i \neq j$ 
  is non-zero so each point $\{e_i\}_{i=1}^n$ has its distinct image.  Thus 
  $O(k^{1+\varepsilon})$ should be equal to $n$, 
  which gives $k=\Omega(n^{\frac{1}{1+\varepsilon}})$. Thus the compression
  length can  be linear in $n$. 
\end{proof}
\paragraph{Euclidean distance} 
Given two vectors $\mathbf{a}, \mathbf{b} \in \R^d$, the Euclidean distance 
between them is denoted as $||\mathbf{a}, \mathbf{b}||$ and defined as 
\\ $\sqrt{\Sigma_{i=1}^d(\mathbf{a}[i]- \mathbf{b}[i])^2}.$ 
A classical result by Johnson and Lindenstrauss~\cite{JL83} suggest 
a compressing scheme  which 
for any set $\D$ of $n$ vectors in $\R^d$ preserve 
pairwise Euclidean distance between any pair of vectors in $\D$.  
\begin{lem}[JL transform~\cite{JL83}]
 For any $\epsilon\in (0, 1)$,  and any integer $n$, 
 let $k$ be a positive integer such that $k=O\left( \frac{1}{\epsilon^2}\log n \right)$. 
 Then for any set $\D$ of $n$ vectors in $\R^d$, there is a map $f: \R^d\rightarrow \R^k$ 
 such that for any pair of vectors $\mathbf{a}, \mathbf{b}$ in $\D:$
 \[
  (1-\epsilon)||\mathbf{a}, \mathbf{b}||^2\leq ||f(\mathbf{a}), f(\mathbf{b})||^2\leq  (1+\epsilon)||\mathbf{a}, \mathbf{b}||^2
 \]
Furthermore, the mapping $f$ can be found in randomized polynomial time.
\end{lem}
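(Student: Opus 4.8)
The plan is to exhibit $f$ as a random linear map and then combine a concentration estimate with a union bound. Concretely, let $R$ be a $k\times d$ matrix whose entries are independent, each drawn either from $N(0,1)$ or uniformly from $\{-1,+1\}$, and set $f(\mathbf{x})=\tfrac{1}{\sqrt{k}}R\mathbf{x}$. Since $f$ is linear, for any pair $\mathbf{a},\mathbf{b}\in\D$ we have $f(\mathbf{a})-f(\mathbf{b})=f(\mathbf{a}-\mathbf{b})$, so it suffices to show that, with probability at least $1-1/n^2$ over the choice of $R$, the map preserves the squared norm of any \emph{fixed} vector $\mathbf{x}$ up to a factor $(1\pm\epsilon)$; a union bound over the at most $\binom{n}{2}<n^2$ difference vectors $\mathbf{a}-\mathbf{b}$ ($\mathbf{a},\mathbf{b}\in\D$) then yields the claim for all pairs simultaneously.

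For a fixed $\mathbf{x}$, write $Y_i=\tfrac{1}{\|\mathbf{x}\|}\langle R_i,\mathbf{x}\rangle$ where $R_i$ is the $i$-th row of $R$. Each $Y_i$ has mean $0$ and variance $1$ (and in the Gaussian case is itself $N(0,1)$), the $Y_i$ are independent, and $\|f(\mathbf{x})\|^2=\tfrac{\|\mathbf{x}\|^2}{k}\sum_{i=1}^k Y_i^2$, so $\E[\|f(\mathbf{x})\|^2]=\|\mathbf{x}\|^2$. The task therefore reduces to showing that $Z:=\tfrac{1}{k}\sum_{i=1}^k Y_i^2$ concentrates around $1$. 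The standard route is via the moment generating function: in the Gaussian case $\sum_i Y_i^2$ is a $\chi^2_k$ variable, and a Chernoff-type argument gives $\Pr[Z\ge 1+\epsilon]\le\exp(-\tfrac{k}{2}(\epsilon^2/2-\epsilon^3/3))$ together with a matching lower-tail bound, so each tail is at most $\exp(-ck\epsilon^2)$ for an absolute constant $c>0$; for $\pm1$ entries one instead invokes the sub-exponential tail of $Y_i^2-1$, which is the one extra technical ingredient. Choosing $k=\tfrac{C}{\epsilon^2}\log n$ with $C$ a large enough absolute constant forces each tail probability below $\tfrac{1}{2n^2}$, hence $\Pr[|Z-1|>\epsilon]\le 1/n^2$.

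Putting these together, with probability at least $1-\binom{n}{2}\cdot\tfrac{1}{n^2}>0$ the single random map $f$ satisfies $(1-\epsilon)\|\mathbf{a}-\mathbf{b}\|^2\le\|f(\mathbf{a})-f(\mathbf{b})\|^2\le(1+\epsilon)\|\mathbf{a}-\mathbf{b}\|^2$ for every pair $\mathbf{a},\mathbf{b}\in\D$, which proves existence. Moreover, since this good event has probability bounded away from $0$, one can sample $R$ and verify all $\binom{n}{2}$ inequalities in $O(n^2kd)$ time, repeating if necessary; this gives the randomized polynomial-time construction. The main obstacle, and essentially the only place where genuine work is required, is the concentration estimate for $\sum_i Y_i^2$ — controlling the MGF of a scaled chi-squared variable, or in the Rademacher case the sub-exponential tail of $Y_i^2-1$ — while the rest of the argument is just linearity of $f$ and a union bound.
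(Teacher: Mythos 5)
The paper states this lemma as a cited background result from \cite{JL83} and gives no proof of its own, only noting afterwards that $f$ can be realized as a random matrix with Gaussian entries \cite{IM98} or uniform $\{+1,-1\}$ entries \cite{Achlioptas03}. Your argument is the standard and correct proof via exactly that construction (concentration of the normalized $\chi^2$-type sum plus a union bound over the $\binom{n}{2}$ difference vectors), so it matches the approach the paper alludes to and fills in the omitted details correctly.
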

%In the above, $||u,v||$ is denote the euclidean distance between vectors $u$ and $v.$ 
%All known constructions of such embeddings involve projecting the $n$ points onto a random $k$-dimensional hyperplane. 
In several followup works on JL lemma, the function $f$ has been regarded as a random 
projection matrix $R\in \R^{d\times k}$, and can be constructed element-wise using 
Gaussian due to Indyk and Motwani~\cite{IM98}, 
or uniform $\{+1,-1\}$ due to Achlioptas~\cite{Achlioptas03}.

\paragraph{Inner product}
Given two vectors $\mathbf{u}, \mathbf{v} \in \R^d$, the Inner product  $\langle \mathbf{u}, \mathbf{v}\rangle $ 
between them is defined as $$\langle \mathbf{u}, \mathbf{v}\rangle :=\Sigma_{i=1}^d\mathbf{u}[i]\mathbf{v}[i].$$  
%Coming up with a good compressing scheme for inner product similarity is a quite active research problem. 
Compression schemes which preserves Inner product  has been studied quite a lot in the recent time.
In the case of binary data, along with some sparsity assumption (bound on the number of $1$'s), 
there are some schemes available  which by padding (add a few extra bits in the vector)  
reduce the Inner product  (of the original data) to the Hamming~\cite{BeraP16}, 
and Jaccard similarity~\cite{ShrivastavaWWW015}. Then the compression scheme for 
Hamming or Jaccard can be applied on the padded version of the data. Similarly, 
in the case of real-valued data, a similar padding technique is known that due padding 
reduces  Inner product  to %cosine similarity~\cite{NeyshaburS15},and 
Euclidean distance~\cite{Shrivastava014}.  Recently, an interesting work by
Ata Kab\'{a}n~\cite{Kaban15} suggested a compression schemes \textit{via} random projection method. Their scheme
approximately preserve Inner Product between any pair of input points and their compression bound 
matches  the bound of JL-transform~\cite{JL83}.
%while preserving pairwise Inner Product.

\paragraph{Jaccard similarity} Binary vectors can also be considered as sets over 
the universe of all possible features, and a set contain only those elements which 
have non-zero entries in the corresponding binary vector. For example two vectors $\mathbf{u}, \mathbf{v}\in \{0, 1\}^d$ 
can  be viewed as two sets $\mathbf{u}, \mathbf{v}\subseteq \{1, 2, \ldots d\}$. %, and the set has respective 
%features corresponding to those bit positions where it has value $1$. 
Here, the underlying similarity measure of interest  is the Jaccard 
similarity which is defined as follows 
$\JS(\mathbf{u}, \mathbf{v})=\frac{|\mathbf{u} \cap \mathbf{v}|}{|\mathbf{u} \cup \mathbf{v}|}.$

A celebrated work by Broder 
\textit{et al.}~\cite{Broder00,BroderCFM98,BroderCPM00} suggested a 
technique to compress a collection of sets while preserving the  Jaccard similarity between any pair of sets.
 Their technique includes taking a random permutation 
of $\{1, 2, \ldots, d\}$ and assigning a value to each set which maps to  
minimum  under that permutation. This compression scheme is popularly known as Minwise hashing. 
\begin{definition}[Minwise Hash function]\label{defn:minwise}
    Let $\pi$ be a permutations over $\{1, \ldots, d\}$, then for a set $\mathbf{u}\subseteq \{1,\ldots d\}$
    %Treating $I$ as a subset of indices, we will use hash functions of the form
    $h_\pi(\mathbf{u}) = \arg\min_i \pi(i)$ for $i \in \mathbf{u}$. Then due to~\cite{Broder00,BroderCFM98,BroderCPM00}, 
    \begin{align*}\label{eq:cosine}
 \Pr[h_\pi(\mathbf{u})=h_\pi(\mathbf{v})]=\frac{|\mathbf{u}\cap \mathbf{v}|}{|\mathbf{u} \cup \mathbf{v}|}.
\end{align*}
\end{definition}

\subsection{Locality Sensitive Hashing}\label{subsection:subsecLSH}
LSH suggest an algorithm or alternatively a data structure for efficient approximate 
nearest neighbor ($c$-$\NN$) search in high dimensional space. 
We formally state it as follows:
 
 \begin{definition}{($c$-Approximate Nearest Neighbor $(c$-$\NN)).$}\label{definition:cNN}
% Given a set $\D$ of points in $\R^d$ and parameters $S, c>0$, 
 Let $\D$ be set  of points in $\R^d$, and $\Sim(.,.)$ be a desired similarity measure. Then for parameters $S, c>0$, 
 the $c$-$\NN$ problem is to construct a data structure that given any query point 
 $\mathbf{q} \in \D$   %does the following with probability $1- \delta$: 
 reports a $cS$-near neighbor of $\mathbf{q}$ in $\D$ if 
 there is an $S$-near neighbor of $\mathbf{q}$ in $D$.
%$cS$-near neighbor of $q$ in $S$. 
Here, we say a point $\mathbf{x}\in \D$ is $S$-near neighbor of $\mathbf{q}$ if $\Sim(\mathbf{q}, \mathbf{x})>S.$
 \end{definition}
  
In the following we   define the concept of locality sensitive hashing (LSH) which suggest a
data structure to solve $c$-$\NN$ problem.
 \begin{definition}[Locality sensitive hashing~\cite{IM98}]\label{definition:LSH}
  Let $\D$ be a set of $n$ vectors in $\R^d$, and $U$ be the hashing
universe. Then, a family $\mathcal{H}$ of functions from $\D$ to $U$
is called as $(S, cS, p_1, p_2)$-sensitive for a similarity measure $\Sim(.,.)$ if for any $\mathbf{x}, \mathbf{y} \in D$,
\begin{itemize}
    \item if $\Sim(\mathbf{x}, \mathbf{y})\geq S$, then $\displaystyle \Pr_{h \in \mathcal{H}}[h(\mathbf{x})=h(\mathbf{y})]\geq p_1$,
    \item if $\Sim(\mathbf{x}, \mathbf{y})\leq cS$, then $\displaystyle \Pr_{h \in \mathcal{H}}[h(\mathbf{x})=h(\mathbf{y})]\leq p_2.$
\end{itemize}
\end{definition}
Clearly, any such scheme is interesting only when $p_1> p_2$, and $c<1$. Let $K, L$ be 
the parameters of the  data structure for LSH, where  $K$ is the number of hashes in each hash table, 
and $L$ is the number of hash tables, then due to \cite{IM98,GIM99}, we 
have $K=O\left(\log_{\frac{1}{p_2}} n\right)$ and 
$L=O\left(n^{\rho}\log n\right)$, where $\rho=\frac{\log p_1}{\log p_2}.$
Thus, given a family of $(S, cS, p_1, p_2)$-sensitive hash functions, and using result of~\cite{IM98,GIM99}, one can 
construct a data structure for $c$-$\NN$ with $O(n^{\rho}\log n)$ query time and space $O(n^{1+\rho}).$
\subsubsection{How to convert similarity preserving compression schemes to LSH ?}
LSH schemes for various similarity measures can be viewed as first compressing the 
input such that it preserve the desired similarity measure, and then applying collision 
based hashing on top of it. If any similarity preserving compression scheme provides a 
similar guarantee as of Definition~\ref{definition:LSH}, then for parameters -- similarity 
threshold $S$, and $c$, one can construct data 
structure for LSH (hash-tables with parameters $K$ and $L$) for the $c$-$\NN$ problem via \cite{IM98,GIM99}.

\section{A compression scheme for high dimensional sparse binary data}\label{sec:BinaryResult}
\begin{comment}
Given a binary vector $\textbf{u}\in \{0,1\}^{d}$, our scheme compress it into a 
$\N$-dimensional binary vector (say) $\mathbf{u'}\in\{0,1\}^{\N}$ as follows (where $\N$ is to
be specified later). We randomly map each bit position (say) $\{i\}_{i=1}^d$ of the original
data to an integer $\{j\}_{j=1}^{\N}$. To compute the $j$-th bit of the compressed
vector $\mathbf{u'}$ we check which bits positions have been mapped to $j$, we compute 
the parity of the corresponding bit positions and assign it to $\mathbf{u'}[j].$ 
\end{comment}
We first formally define our Compression Scheme as follows:
\begin{definition}[\textbf{B}inary \textbf{C}ompression \textbf{S}cheme]\label{defi:bcs}
 Let $\N$ be the number of buckets, for $i=1$ to $d$, we randomly assign 
 the $i$-th  position to a bucket number $b(i)$  $\in \{1, \ldots \N\}$. Then a vector $\mathbf{u} \in \{0, 1\}^d$,  
 compressed into a vector $\mathbf{u}' \in \{0, 1\}^{\N}$ as follows:
 %\[\mathbf{u}'[i] = \sum_{j : b(j) = i} \mathbf{u}[i]  \pmod 2.\]  %$u'[i]=\Sigma_{i} u[b^{-1}(i)] \mod 2.$
 \[\mathbf{u}'[j] = \sum_{i : b(i) = j} \mathbf{u}[i]  \pmod 2.\]  %$u'[i]=\Sigma_{i} u[b^{-1}(i)] \mod 2.$
\end{definition}

\begin{note}
 For brevity we denote the Binary Compression Scheme as $\BCS$. 
\end{note}
\paragraph{Some intuition} Consider two binary vectors $\mathbf{u}, \mathbf{v} \in\{0, 1\}^d$,  we call a bit position 
 \textit{``active''} if at least one of the vector between $\mathbf{u}$ and $\mathbf{v}$ has value $1$ in that  position. 
Let  $\tb$ be the maximum number of $1$ in any vector, then
  there could be at most $2\tb$ active positions shared between vectors $\mathbf{u}$ and $\mathbf{v}$. 
Further,  using the $\BCS$, let   $\mathbf{u}$ and $\mathbf{v}$ get 
compressed into binary vectors $\mathbf{u'}, \mathbf{v'} \in \{0, 1\}^{\N}$.
In the compressed vectors, we call a particular bit position \textit{``pure''} if the number of 
active positions mapped to that position is at 
most one, otherwise we call it \textit{``corrupted''}. It is easy to see that the contribution of pure bit 
positions in $\mathbf{u'}, \mathbf{v'}$ towards Hamming distance (or Inner product similarity), is exactly equal to the 
contribution of the bit positions in $\mathbf{u}, \mathbf{v}$ which get mapped to the pure bit positions. 
 \begin{figure}[ht!]
\centering
\includegraphics[scale=.032]{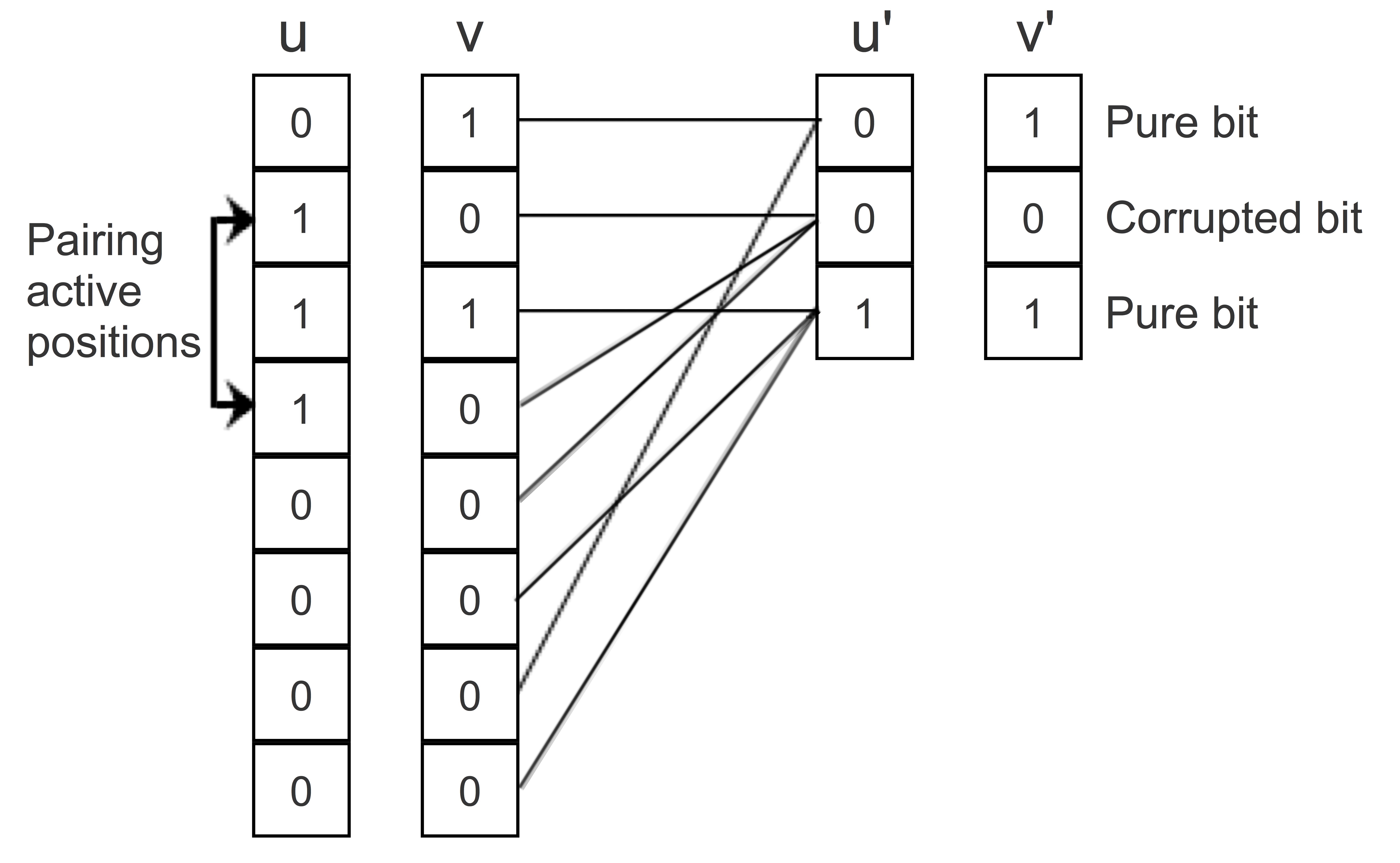}
%\caption{A simple caption \label{overflow}}
\end{figure}
 The number of maximum possible corrupted bits in the compressed data is  $\tb$ because in the worst case 
it is possible that all the $2\tb$ active bit position got paired up while compression. 
The deviation of Hamming distance (or Inner product similarity) between $\mathbf{u'}$ and $\mathbf{v'}$ from that 
of $\mathbf{u}$ and $\mathbf{v}$, corresponds to the number of corrupted bit positions shared between 
$\mathbf{u'}$ and $\mathbf{v'}$. The above figure illustrate this with an example, and the lemma below analyse it.
%in the following lemma:
 \begin{lem}\label{lem:compression}
 Consider two binary vectors $\mathbf{u}, \mathbf{v} \in\{0, 1\}^d$, which get compressed into  
 vectors $\mathbf{u'}, \mathbf{v'} \in \{0, 1\}^{\N}$ 
 using the $\BCS$, and  suppose $\tb$ is the maximum number of $1$ in any vector. 
 Then for an  integer $r\geq1$, and 
  $\epsilon>0$, probability that $\mathbf{u'}$ and $\mathbf{v'}$ share more than $\epsilon r$ 
  corrupted positions is at most
 $\left(\frac{2\tb}{\sqrt{\N}}\right)^{\epsilon r}.$
\end{lem}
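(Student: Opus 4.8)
The plan is to bound the number of corrupted buckets directly, by observing that many corrupted buckets force the existence of many pairwise disjoint ``collisions'' among the active positions, and then to union bound over all such collision patterns.

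First I would fix notation. Call a position $i\in\{1,\dots,d\}$ \emph{active} if $\mathbf{u}[i]=1$ or $\mathbf{v}[i]=1$, and let $A$ be the set of active positions; since each of $\mathbf{u},\mathbf{v}$ has at most $\tb$ ones, $|A|\le 2\tb$. Call a bucket $j$ \emph{corrupted} if $|\{i\in A:\ b(i)=j\}|\ge 2$, and let $Z$ be the number of corrupted buckets; this is precisely the quantity ``number of corrupted positions shared by $\mathbf{u}'$ and $\mathbf{v}'$'' appearing in the statement. Set $m:=\epsilon r$ (rounded up if it is not an integer); the goal is $\Pr[Z\ge m]\le (2\tb/\sqrt{\N})^{m}$.

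The key step is structural: if $Z\ge m$, then selecting two active positions from each of $m$ distinct corrupted buckets produces $m$ pairwise disjoint pairs $\{a_1,b_1\},\dots,\{a_m,b_m\}\subseteq A$ with $b(a_\ell)=b(b_\ell)$ for every $\ell$ (distinct buckets contain disjoint sets of active positions). Hence $\{Z\ge m\}$ is contained in the union, over all size-$m$ partial matchings $M$ on the vertex set $A$, of the events $\bigcap_{\{a,b\}\in M}\{b(a)=b(b)\}$. For a fixed $M$ these $m$ events depend on $2m$ distinct bucket assignments, which are independent, so the intersection has probability exactly $\N^{-m}$. The number of size-$m$ matchings on a set of $|A|\le 2\tb$ vertices is at most $(2\tb)^{2m}/(2^m m!)$: choose an ordered $2m$-tuple of distinct active positions, of which there are at most $(2\tb)^{2m}$, and divide by the $2^m m!$ ways of ordering the $m$ pairs and the two entries within each pair. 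A union bound therefore gives
\[
\Pr[Z\ge m]\ \le\ \frac{(2\tb)^{2m}}{2^m\, m!\ \N^{m}}\ =\ \frac{1}{m!}\left(\frac{2\tb^2}{\N}\right)^{m}.
\]

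Finally I would convert this into the stated bound. If $\N<4\tb^2$ there is nothing to prove, since then $2\tb/\sqrt{\N}>1$ and the claimed bound already exceeds $1$; so assume $\N\ge 4\tb^2$, in particular $\N\ge\tb^2$. Then $\tb^m\le m!\,\N^{m/2}$ (because $(\tb/\sqrt{\N})^m\le 1\le m!$), and this is exactly the inequality needed to pass from $\frac{1}{m!}(2\tb^2/\N)^m$ to $(2\tb/\sqrt{\N})^m$; substituting $m=\epsilon r$ completes the proof. The only point that is not completely routine is obtaining the exponent $\sqrt{\N}$ rather than $\N$: this is precisely where the $1/m!$ coming from the matching count (equivalently, the regime $\N\ge\tb^2$ in which the lemma is ever applied) is essential, since a crude union bound over ordered pairs of active positions without that saving would lose a square root.
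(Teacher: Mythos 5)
Your proof is correct and follows essentially the same route as the paper's: both reduce the event of at least $\epsilon r$ corrupted buckets to the existence of pairwise disjoint colliding pairs of active positions, and then union-bound over all such partial matchings. The only real difference is in the accounting: the paper deliberately extracts only $\epsilon r/2$ disjoint colliding pairs from the $\epsilon r$ corrupted buckets, so that the per-matching probability $\N^{-\epsilon r/2}$ combines with the crude matching count $(2\tb)^{\epsilon r}$ to yield $\left(2\tb/\sqrt{\N}\right)^{\epsilon r}$ directly, whereas you use all $\epsilon r$ pairs, obtain the sharper intermediate estimate $\frac{1}{m!}\left(2\tb^{2}/\N\right)^{m}$, and then relax it to the stated form via the case split on whether $\N\ge 4\tb^{2}$ --- which is quantitatively stronger in the regime $\N=\Theta(\tb^{2})$ where the lemma is actually applied.
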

\begin{proof}
 We first calculate the probability that a particular bit position gets corrupted between $\mathbf{u'}$ and $\mathbf{v'}$. 
 %As there are at most $2t$ active positions between vectors $\mathbf{u}$ and $\mathbf{v}$, the number of ways to corrupt a bit  position in $u'$ or $v'$ is at most $2t \choose 2$. 
 As there are at most $2\tb$ active positions shared between vectors $\mathbf{u}$ and $\mathbf{v}$, the number of ways of
 pairing two active positions from $2\tb$ active positions is  at most $2\tb \choose 2$, and this 
 pairing will result a corrupted bit position in $\mathbf{u'}$ or $\mathbf{v'}$. 
 Then, the  probability that a particular bit position in 
 $\mathbf{u'}$ or $\mathbf{v'}$ gets corrupted is at most $\frac{{2\tb \choose 2}}{{\N}}\leq \left(\frac{4{\tb}^2}{\N} \right).$
 Further, if the deviation of Hamming distance (or Inner product similarity) between $\mathbf{u'}$ and $\mathbf{v'}$ 
 from that of  $\mathbf{u}$ and $\mathbf{v}$ is more than $\epsilon r$, then at least $\epsilon r$ corrupted 
 positions are shared between $\mathbf{u'}$ and $\mathbf{v'}$, 
 which implies that at least $\frac{\epsilon r}{2}$ pair of active positions in $\mathbf{u}$ and $\mathbf{v}$
 got paired up while compression. % mapped to the corrupted bit positions  in $u'$ or $v'$. 
 The number of possible ways of pairing $\frac{\epsilon r}{2}$ active positions from $2\tb$ active positions 
 is at most ${2\tb \choose \frac{\epsilon r}{2}}{2\tb-\frac{\epsilon r}{2} \choose \frac{\epsilon r}{2}} \frac{\epsilon r}{2}!\leq
 (2\tb)^{\epsilon r}.$ 
 Since the probability that a pair of active positions got mapped in the same bit position in the compressed data 
 is $\frac{1}{\N}$, the probability that  $\frac{\epsilon r}{2}$  pair of active positions got mapped 
 in $\frac{\epsilon r}{2}$ distinct bit positions in the compressed data is at 
 most $(\frac{1}{\N})^{\frac{\epsilon r}{2}}$. 
   Thus, by union bound, the probability that at least $\epsilon r$ corrupted bit 
 position shared between $\mathbf{u'}$ and $\mathbf{v'}$  is at most $\frac{(2\tb)^{\epsilon r}}{{\N}^{\frac{\epsilon r}{2}}}
 =\left(\frac{2\tb}{\sqrt{\N}}\right)^{\epsilon r}.$
 \end{proof}
 In the following lemma we generalize the above result on a set of $n$ binary vectors. 
 We suggest a compression bound such that   any pair of  compressed vectors share 
 only a very small number of corrupted bits, with high probability.  
 \begin{lem}\label{lem:compressionBound}
  Consider a set~$\mathrm{U}$ of $n$  binary vectors \\$\{\mathbf{u_i}\}_{i=1}^n\subseteq \{0, 1\}^d$, 
  which get compressed into  a set  $\mathrm{U'}$ of binary
  vectors $\{\mathbf{u_i'}\}_{i=1}^n\subseteq\{0, 1\}^{\N}$  using the $\BCS$. Then for any 
  positive  integer $r$, and   $\epsilon>0$, 
 \begin{itemize}
  \item if $\epsilon r >3 \log n$, and  we set $\N=16{\tb}^2$, then probability 
  that   for all $\mathbf{u_i'}, \mathbf{u_j'}\in \mathrm{U'}$
 % for any pair of compressed vectors $\mathbf{u_i'}$ and $\mathbf{u_j'}$  
  share more than $\epsilon r$   corrupted positions is at most  $\frac{1}{n}$. 
  \item If $\epsilon r < 3 \log n$, and  we set $\N=144{\tb}^2\log^2n $, then probability 
  that   for all $\mathbf{u_i'}, \mathbf{u_j'}\in \mathrm{U'}$
  %that for any pair of compressed vectors $\mathbf{u_i'}$ and $\mathbf{u_j'}$ 
  share more than $\epsilon r$   corrupted positions is at most  $\frac{1}{n}$.
 \end{itemize}
 \end{lem}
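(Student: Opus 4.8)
The plan is to obtain Lemma~\ref{lem:compressionBound} from the single-pair estimate of Lemma~\ref{lem:compression} by a union bound, after choosing $\N$ in each of the two regimes so that the single-pair failure probability, multiplied by the number of pairs, is still at most $1/n$.

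First I would fix an arbitrary pair $\mathbf{u_i},\mathbf{u_j}\in\mathrm{U}$. Each of these two vectors has at most $\tb$ ones, so Lemma~\ref{lem:compression} applies verbatim and gives that the probability that $\mathbf{u_i'}$ and $\mathbf{u_j'}$ share more than $\epsilon r$ corrupted positions is at most $(2\tb/\sqrt{\N})^{\epsilon r}$. There are fewer than $n^2$ pairs, so by a union bound the probability that \emph{some} pair shares more than $\epsilon r$ corrupted positions is at most $n^2\,(2\tb/\sqrt{\N})^{\epsilon r}$. Thus the whole lemma reduces to choosing $\N$ so that $(2\tb/\sqrt{\N})^{\epsilon r}\le n^{-3}$; note that $2\tb/\sqrt{\N}$ is exactly the per-bucket corruption probability bound coming out of the proof of Lemma~\ref{lem:compression}, so this is really just a matter of deciding how ``safe'' one bucket must be made in each regime.

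In the first regime, $\epsilon r>3\log n$, the exponent is already large, so it is enough to make the base equal to $1/2$: with $\N=16\tb^2$ we get $2\tb/\sqrt{\N}=1/2$, hence $(1/2)^{\epsilon r}<(1/2)^{3\log n}=n^{-3}$ (logarithms to base $2$), and the union bound gives total failure probability $<n^2\cdot n^{-3}=1/n$. In the second regime, $\epsilon r<3\log n$, the exponent can be too small for the base $1/2$ to help, so instead I would drive the per-bucket corruption probability down to order $1/\log n$: with $\N=144\tb^2\log^2 n$ we get $2\tb/\sqrt{\N}=1/(6\log n)$, so the per-pair bound becomes $(6\log n)^{-\epsilon r}$ and the union bound gives total failure at most $n^2(6\log n)^{-\epsilon r}$.

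The step I expect to need the most care is precisely this last one: checking that $n^2(6\log n)^{-\epsilon r}\le 1/n$, i.e.\ that $\epsilon r\cdot\log(6\log n)\ge 3\log n$, holds throughout the range of $\epsilon r$ that matters here (equivalently, that $\epsilon r$ is not much smaller than $\log n/\log\log n$) --- a short but slightly fussy comparison, in contrast with the clean cancellation in the first regime. Everything else (the reduction to one pair, the union bound, and substituting the two prescribed values of $\N$ to read off the bases $1/2$ and $1/(6\log n)$) is routine. If one only wants the weaker reading ``for every fixed $i,j$ the probability is at most $1/n$'', the same computations give it directly, with $n^{-3}$ replaced by $n^{-1}$ and no union bound needed.
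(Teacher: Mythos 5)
Your first case is exactly the paper's argument: apply Lemma~\ref{lem:compression} to a fixed pair, note that $2\tb/\sqrt{16\tb^2}=1/2$ so the pair fails with probability $(1/2)^{\epsilon r}<n^{-3}$ when $\epsilon r>3\log n$, and union-bound over the ${n \choose 2}$ pairs. No issues there.

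The second case has a genuine gap, and it is precisely the step you flagged as ``slightly fussy.'' The inequality $n^2(6\log n)^{-\epsilon r}\le 1/n$ is equivalent to $\epsilon r\,\log(6\log n)\ge 3\log n$, and nothing in the hypothesis of the second bullet guarantees this: the regime $\epsilon r<3\log n$ includes $\epsilon r$ of constant order (indeed $r=O(1)$ is exactly the case the paper highlights in its remarks as the one where LSH needs linear compression length), and for, say, $\epsilon r=1$ your bound $n^2/(6\log n)$ is enormous rather than at most $1/n$. Driving the per-bucket corruption probability down to $1/(6\log n)$ cannot by itself rescue the union bound when the exponent $\epsilon r$ is small. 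The paper closes this case with a different device: it replicates every bit position $3\log n$ times, which multiplies the dimension, the sparsity ($\tb\mapsto 3\tb\log n$), the Hamming distances, and the corruption threshold ($\epsilon r\mapsto 3\epsilon r\log n$) all by the same factor, and then reapplies Lemma~\ref{lem:compression} to the scaled vectors; with $\N=144\tb^2\log^2 n$ the base is again $1/2$ but the exponent is now $3\epsilon r\log n\ge 3\log n$ (at least when $\epsilon r\ge 1$), so the $n^{-3}$ bound and the union bound go through exactly as in the first case. To repair your route you would have to either restrict to $\epsilon r$ at least of order $\log n/\log\log n$ or import some such amplification idea; as written, the second bullet is not proved.
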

   \begin{proof}
 In the first case, for a fixed pair of compressed 
 vectors $\mathbf{u_i'}$ and $\mathbf{u_j'}$, due to lemma~\ref{lem:compression}, probability that they
 share more than $\epsilon r$ corrupted positions is at most  $\left(\frac{2{\tb}}{\sqrt{\N}}\right)^{\epsilon r}.$
 If $\epsilon r >3 \log n$, and  $\N=16{\tb}^2$, then the above probability is at most 
 $\left(\frac{2\tb}{\sqrt{\N}}\right)^{\epsilon r}<\left(\frac{2{\tb}}{4t}\right)^{3 \log n}=\left(\frac{1}{2}\right)^{3 \log n}<\frac{1}{n^3}.$ As there are at most 
 ${n \choose 2}$ pairs of vectors, then the probability of  every pair of compressed vectors  share more than $\epsilon r$ 
  corrupted positions is at most  $\frac{{n \choose 2}}{n^3}<\frac{1}{n}$. 
  
  In the second case, %for a fixed pair of compressed  vectors $\mathbf{u_i'}$ and %$\mathbf{u_j'}$ probability that they  have more than $\epsilon r$ corrupted positions %is at most  $\left(\frac{2t}{\sqrt{\N}}\right)^{\epsilon r}.$
 as $\epsilon r <3 \log n$, we cannot upper bound the desired probability similar to the first case. 
 Here we use a trick, in the input data we replicate each bit position $3 \log n$ times, 
  which makes a $d$ dimensional 
 vector to a $3d\log n$  dimensional, and as a consequence the  Hamming distance 
 (or Inner product similarity) is also scaled up by a multiplicative factor of $3 \log n$.
 %As a consequence, we get back in the first case. 
 %\textbf{Pls explain why this trick will not require extra randomness!!!}
We now apply the compression scheme on these scaled vectors, then for a fixed pair of compressed 
 vectors $\mathbf{u_i'}$ and $\mathbf{u_j'}$,  probability that they  have more than 
 $3 \epsilon r \log n $ corrupted positions is at most    
 $\left(\frac{6\tb\log n }{\sqrt{\N}}\right)^{3 \epsilon r\log n}$. As we set 
 $\N=144{\tb}^2\log^2n $, the above probability is at most 
 $\left(\frac{6\tb\log n }{\sqrt{144{\tb}^2\log^2n}}\right)^{3 \epsilon r \log n}< \left(\frac{1}{2}\right)^{3 \log n}<\frac{1}{n^3}.$
 The final probability follows by applying union bound over all ${n \choose 2}$ pairs.
  \end{proof}
  \begin{rem}
   We would like to emphasize that using the $\BCS$, for any pair of vectors, 
   the Hamming distance between them in the compressed version is always less than or equal to 
   their original Hamming distance. Thus, this compression scheme has only one-sided-error for 
   the Hamming case. However, in the case of inner product similarity this compression 
   scheme  can possibly have two-sided-error -- 
   as the inner product in the compressed version  can be    smaller or higher than the inner product of original input. 
   %as the inner product in the compressed version possibly can be    smaller or higher than the inner product of original input. 
   We illustrate this by the following example, 
   where the compression scheme assigns both bit positions of the input to one bit of the compressed data. 
  \begin{itemize}
 \item If $\mathbf{u}=[1, 0]~ \mbox{and~} \mathbf{v}=[0, 1]$, then $\IP(\mathbf{u}, \mathbf{v})=0$; 
 %and suppose the compression assign both bit position of the input to one bit of the compressed data;  
 and  after compression $\mathbf{u'}=[1]\mbox{~and~} \mathbf{v'}=[1]$ which 
gives $\IP(\mathbf{u'}, \mathbf{v'})=1$. 
\item If  $\mathbf{u}=[1, 1]~ \mbox{and~} \mathbf{v}=[1, 1]$, then $\IP(\mathbf{u}, \mathbf{v})=2$, 
and after compression $\mathbf{u'}=[0]\mbox{~and~} \mathbf{v'}=[0]$ which 
gives $\IP(\mathbf{u'}, \mathbf{v'})=0.$
\end{itemize}
\end{rem}

As a consequence of Lemma~\ref{lem:compressionBound} and the above remark, we present our compression 
guarantee for the Hamming distance and Inner product similarity.
 % \subsection{Analysis for Hamming distance}
{ \renewcommand{\thetheorem}{\ref{theorem:compressionHamming}}
 \begin{theorem}%\ref{theorem:compressionHamming}
  Consider a set $\mathrm{U}$ of binary vectors \\ $\{\mathbf{u_i}\}_{i=1}^n\subseteq \{0, 1\}^d$, 
  a positive  integer $r$, and   $\epsilon>0$. 
  If $\epsilon r >3 \log n$,  we set $\N=O({\tb}^2)$;  if $\epsilon r < 3 \log n$,  
    we set $\N=O({\tb}^2\log^2n) $,  and compress them 
  into  a set  $\mathrm{U'}$ of binary vectors $\{\mathbf{u_i'}\}_{i=1}^n\subseteq\{0, 1\}^{\N}$  using %the Binary Compression   Scheme
  $\BCS$.  Then for all  $\mathbf{u_i}, \mathbf{u_j}\in \mathrm{U}$, % (1\leq i, j\leq n)$ 
\begin{itemize}
 \item if $\dH(\mathbf{u_i}, \mathbf{u_j})< r$, then $\Pr [\dH({\mathbf{u_i}}', {\mathbf{u_j}}')< r]=1$,
 \item if $\dH(\mathbf{u_i}, \mathbf{u_j})\geq (1+\epsilon)r$, then $\Pr [\dH({\mathbf{u_i}}', {\mathbf{u_j}}')< r]<\frac{1}{n}.$
\end{itemize}
\end{theorem}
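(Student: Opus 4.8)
The plan is to obtain both bullets from the combinatorial description of $\BCS$ in Definition~\ref{defi:bcs} together with Lemma~\ref{lem:compressionBound}, which already carries out the union bound over all pairs. For the first bullet I would record the elementary fact (also noted in the remark above) that $\BCS$ never increases Hamming distance. The reason is that a bucket $j$ has ${\mathbf{u_i}}'[j]\neq{\mathbf{u_j}}'[j]$ exactly when an odd number of coordinates $i$ with $b(i)=j$ satisfy $\mathbf{u_i}[i]\neq\mathbf{u_j}[i]$, whereas coordinates on which the two vectors agree add $0$ to this parity and are irrelevant. Hence $\dH({\mathbf{u_i}}',{\mathbf{u_j}}')$ equals the number of buckets receiving an odd number of such ``difference coordinates'', and mapping two or more difference coordinates into a common bucket can only preserve or decrease this count (two contribute $0$ in place of $2$, three contribute $1$ in place of $3$, and so on). Since $\dH$ takes integer values, $\dH(\mathbf{u_i},\mathbf{u_j})<r$ forces $\dH({\mathbf{u_i}}',{\mathbf{u_j}}')\le\dH(\mathbf{u_i},\mathbf{u_j})<r$ for every bucket assignment, i.e.\ with probability $1$.

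For the second bullet I would quantify how far the distance can shrink. Fix a pair with $\dH(\mathbf{u_i},\mathbf{u_j})\ge(1+\epsilon)r$; if also $\dH({\mathbf{u_i}}',{\mathbf{u_j}}')<r$, then the Hamming distance dropped by strictly more than $\epsilon r$. By the description above, this drop is entirely attributable to buckets into which two or more difference coordinates were mapped, i.e.\ to corrupted positions in the sense of the intuition paragraph, and the active positions available to be corrupted number at most $2\tb$. This is exactly the deviation-versus-corruption accounting performed inside the proof of Lemma~\ref{lem:compression}: a drop exceeding $\epsilon r$ forces ${\mathbf{u_i}}'$ and ${\mathbf{u_j}}'$ to share more than $\epsilon r$ corrupted positions. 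Consequently the event $\{\dH({\mathbf{u_i}}',{\mathbf{u_j}}')<r\}$ is contained in the event that this pair shares more than $\epsilon r$ corrupted positions, and Lemma~\ref{lem:compressionBound} bounds the latter: with $\N=16\tb^2=O(\tb^2)$ when $\epsilon r>3\log n$, and $\N=144\tb^2\log^2 n=O(\tb^2\log^2 n)$ when $\epsilon r<3\log n$, the probability that some pair of compressed vectors shares more than $\epsilon r$ corrupted positions is at most $1/n$. In particular $\Pr[\dH({\mathbf{u_i}}',{\mathbf{u_j}}')<r]<1/n$ for the fixed pair.

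The one point requiring care is the middle step: ruling out a Hamming drop of more than $\epsilon r$ that is ``concentrated'' on few corrupted buckets. I would make this precise exactly as in the proof of Lemma~\ref{lem:compression}, by charging the drop to disjoint pairs of colliding active positions (a drop of $D$ yields at least $D/2$ such disjoint colliding pairs among the at most $2\tb$ active positions), which is the event whose probability Lemma~\ref{lem:compression} bounds by $(2\tb/\sqrt{\N})^{\epsilon r}$; the resulting constant slack is harmless since $\N$ is needed only up to $O(\cdot)$. I would also note that in the regime $\epsilon r<3\log n$ the replication trick used inside Lemma~\ref{lem:compressionBound} scales both $r$ and the corrupted count by the same factor $3\log n$, so the containment of events survives the reduction. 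Beyond this, the argument is a direct appeal to the two lemmas.
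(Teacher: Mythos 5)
Your proof is correct and follows essentially the same route as the paper, which derives the theorem directly from Lemma~\ref{lem:compressionBound} together with the remark that $\BCS$ has only one-sided error for Hamming distance (the parity argument you give for why the distance can never increase). You in fact supply more detail than the paper does---in particular the careful charging of a distance drop of $D$ to at least $D/2$ disjoint colliding pairs of active positions, which is the event actually bounded inside Lemma~\ref{lem:compression}---but the decomposition and the key lemmas invoked are identical.
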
\addtocounter{theorem}{-1}}
{\renewcommand{\thetheorem}{\ref{theorem:compressionIP}}
 \begin{theorem}
 Consider a set $\mathrm{U}$ of binary vectors \\$\{\mathbf{u_i}\}_{i=1}^n\subseteq \{0, 1\}^d$, 
 a positive  integer $r$, and   $\epsilon>0$. 
  If $\epsilon r >3 \log n$,  we set $\N=O({\tb}^2)$;  if $\epsilon r < 3 \log n$,  
    we set $\N=O({\tb}^2\log^2n) $,  and compress them into 
    a set  $\mathrm{U'}$ of binary vectors
    $\{\mathbf{u_i'}\}_{i=1}^n\subseteq\{0, 1\}^{\N}$  using %the Binary Compression   Scheme
  $\BCS$.  Then for all $\mathbf{u_i}, \mathbf{u_j}\in \mathrm{U}$ the following is true with probability
 at least $1-\frac{1}{n},$
 \[
  (1-\epsilon)\IP(\mathbf{u_i}, \mathbf{u_j})\leq \IP({\mathbf{u_i}}', {\mathbf{u_j}}')\leq (1+\epsilon)\IP(\mathbf{u_i}, \mathbf{u_j}).
 \]
  \end{theorem}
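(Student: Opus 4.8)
The plan is to obtain Theorem~\ref{theorem:compressionIP} from Lemma~\ref{lem:compressionBound} in the same way Theorem~\ref{theorem:compressionHamming} is obtained, but tracking signs carefully, since (unlike Hamming distance) the inner product of the compressed vectors can come out either larger or smaller than the original, as the remark following Lemma~\ref{lem:compressionBound} shows. So the real target is the additive statement that $|\IP(\mathbf{u_i}',\mathbf{u_j}') - \IP(\mathbf{u_i},\mathbf{u_j})| \le \epsilon r$ holds simultaneously for all pairs with probability at least $1-\frac{1}{n}$; the multiplicative $(1\pm\epsilon)$ form then follows for every pair whose inner product is at least the threshold $r$, which is the regime of interest, exactly as the Hamming threshold plays that role in Theorem~\ref{theorem:compressionHamming}.

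First I would fix a pair $\mathbf{u_i},\mathbf{u_j}$, let $A$ be its set of active positions (so $|A|\le 2\tb$), and split $\IP(\mathbf{u_i}',\mathbf{u_j}')$ into the contribution of the pure buckets plus the contribution of the corrupted buckets. As observed in the discussion preceding Lemma~\ref{lem:compression}, a pure bucket reproduces exactly the product $\mathbf{u_i}[\ell]\,\mathbf{u_j}[\ell]$ of the (at most one) active position $\ell$ it receives, while positions outside $A$ contribute $0$ to either inner product; hence the pure part equals $\IP(\mathbf{u_i},\mathbf{u_j})$ minus the combined contribution of the active positions that landed in corrupted buckets. Therefore $\IP(\mathbf{u_i}',\mathbf{u_j}') - \IP(\mathbf{u_i},\mathbf{u_j})$ is a difference of two nonnegative quantities, each of them at most a constant times the number of corrupted buckets shared between $\mathbf{u_i}'$ and $\mathbf{u_j}'$; so $|\IP(\mathbf{u_i}',\mathbf{u_j}') - \IP(\mathbf{u_i},\mathbf{u_j})|$ is bounded by a small constant times that count, which is exactly the quantity controlled by Lemma~\ref{lem:compressionBound}.

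Next I would apply Lemma~\ref{lem:compressionBound} with $\N = O(\tb^2)$ if $\epsilon r > 3\log n$ and $\N = O(\tb^2\log^2 n)$ otherwise, taking the hidden constant large enough to absorb the small constant from the previous step (equivalently, running the lemma with $\epsilon$ replaced by $\epsilon/c$). This gives, with probability at least $1-\frac{1}{n}$, that every pair $\mathbf{u_i}',\mathbf{u_j}'$ shares at most $\epsilon r$ corrupted positions, and hence $|\IP(\mathbf{u_i}',\mathbf{u_j}') - \IP(\mathbf{u_i},\mathbf{u_j})| \le \epsilon r$ for all pairs on that event. For any pair with $\IP(\mathbf{u_i},\mathbf{u_j}) \ge r$ this is at most $\epsilon\,\IP(\mathbf{u_i},\mathbf{u_j})$, which is the claimed two-sided bound. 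The $\epsilon r < 3\log n$ branch additionally relies on the replication trick already built into Lemma~\ref{lem:compressionBound}, which scales every inner product by the factor $3\log n$ and so leaves the multiplicative guarantee unchanged.

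Since all of the probabilistic heavy lifting is packaged inside Lemma~\ref{lem:compressionBound}, no substantial obstacle remains; the one point that needs care is the bookkeeping in the decomposition step, because a single corrupted bucket can both create a spurious product equal to $1$ and cancel a genuine one, and it is precisely this two-way effect that forces the conclusion to be a two-sided $(1\pm\epsilon)$ estimate rather than the one-sided, no-false-negative guarantee of Theorem~\ref{theorem:compressionHamming}.
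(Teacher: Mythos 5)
Your proposal is correct and follows essentially the same route as the paper: the paper obtains Theorem~\ref{theorem:compressionIP} directly as a consequence of Lemma~\ref{lem:compressionBound} together with the remark on the two-sided error of $\BCS$ for inner product, exactly the reduction you describe. The only place you are more explicit than the paper is the additive-to-multiplicative conversion from $|\IP(\mathbf{u_i}',\mathbf{u_j}')-\IP(\mathbf{u_i},\mathbf{u_j})|\le \epsilon r$ to the $(1\pm\epsilon)$ form, which indeed requires $\IP(\mathbf{u_i},\mathbf{u_j})\ge r$ --- a restriction the paper leaves implicit in its statement.
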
\addtocounter{theorem}{-1}}

\subsection{A tighter analysis for Hamming distance} 

In this subsection, we strengthen our analysis for the Hamming case, 
and shows a compression bound which is independent of the  dimension 
and the sparsity, and depends only on the Hamming distance between 
the vectors. %However, the result follows only for a pair of vectors and in expectation. 
However, we could show our result in expectation, and only for a pair of vectors.

For a pair of vectors $\mathbf{u}, \mathbf{v}\in \{0, 1\}^d$, we say 
that a bit position is \textit{``unmatched''} if exactly one of the vector has value $1$ in that position 
and the other one has value $0$.  We say that a bit position in the compressed data is \textit{``odd-bit''} if odd 
number of unmatched positions get mapped to that bit. Let $\mathbf{u}$ and $\mathbf{v}$ get compressed into vectors $\mathbf{u'}$ and $\mathbf{v'}$ using the 
$\BCS$. Our observation is that 
each odd bit position in the compressed data contributes to Hamming distance in $1$ in the compressed data. 
%Our observation is that if odd number of unmatched position get mapped to any bit in the compressed data, then that bit 
We illustrate this with an example: let $\mathbf{u}[i,j,k]=[1,0,1],~\mathbf{v}[i,j,k]=[0,1,0]$
%then bit positions $i, j, k$ are odd number of unmatched bit positions in the original data. 
and let $i, j, k$ get mapped to bit position $i'$ (say) in the compressed data, then $\mathbf{u}[i']=0, \mathbf{v}[i']=1$,
 then clearly $\dH(\mathbf{u}[i'], \mathbf{v}[i'])=1.$
{\renewcommand{\thetheorem}{\ref{theorem:compressionR}}
\begin{theorem}
 Consider two binary vectors $\mathbf{\mathbf{u}}, \mathbf{v} \\\in\{0, 1\}^d$, which get compressed into  
 vectors $\mathbf{\mathbf{u'}}, \mathbf{v'} \in \{0, 1\}^{\N}$  using $\BCS$. If we set  $\N=O(r^2)$, then
 \begin{itemize}
  \item if $\dH(\mathbf{u}, \mathbf{v})< r$, then $\Pr [\dH({\mathbf{u}}', {\mathbf{v}}')< r]=1$, and 
  \item if $\dH(\mathbf{\mathbf{u}}, \mathbf{v})\geq 4r$,  then $\E[\dH(\mathbf{\mathbf{u'}}, \mathbf{v'})]>2r.$
 \end{itemize}
 \end{theorem}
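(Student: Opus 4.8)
The plan is to split into the two bullets. The first bullet is immediate from the structure of the scheme: as observed in the remark after Lemma~\ref{lem:compressionBound}, applying $\BCS$ can only merge bit positions, and merging can only decrease (or keep equal) the Hamming distance, since two unmatched positions that collide cancel rather than add. Hence $\dH(\mathbf{u}',\mathbf{v}') \le \dH(\mathbf{u},\mathbf{v}) < r$ with probability $1$, and no constraint on $\N$ is needed for this part.

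For the second bullet, let $m = \dH(\mathbf{u},\mathbf{v}) \ge 4r$ be the number of unmatched positions, and index them $1,\dots,m$. I want a lower bound on $\E[\dH(\mathbf{u}',\mathbf{v}')]$. The cleanest route is to lower bound by the expected number of unmatched positions that land \emph{alone} in their bucket (i.e. in a ``pure'' bucket in the earlier terminology), since each such position contributes exactly $1$ to $\dH(\mathbf{u}',\mathbf{v}')$, and these contributions are disjoint. For a fixed unmatched position $i$, the probability that no other unmatched position shares its bucket is $\left(1 - \tfrac{1}{\N}\right)^{m-1} \ge 1 - \tfrac{m-1}{\N}$. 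By linearity of expectation, the expected number of such ``lonely'' unmatched positions is at least $m\left(1 - \tfrac{m-1}{\N}\right) \ge m - \tfrac{m^2}{\N}$. So $\E[\dH(\mathbf{u}',\mathbf{v}')] \ge m - \tfrac{m^2}{\N}$.

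Now the arithmetic: I need $m - \tfrac{m^2}{\N} > 2r$ whenever $m \ge 4r$. This does \emph{not} hold for all $m$ with a fixed $\N = O(r^2)$ — if $m$ is huge the bound $m - m^2/\N$ goes negative. The fix is that a larger true distance only helps: if $m$ is large, even a crude bound on collisions suffices, whereas the delicate regime is $m$ close to $4r$. Concretely, I will argue that if $\N \ge c\, r^2$ for a suitable constant $c$, then for $m$ in the range $[4r, \N/4]$ the estimate $m - m^2/\N \ge m - m/4 = \tfrac{3m}{4} \ge 3r > 2r$ works, and for $m > \N/4$ a separate coupling/counting argument shows the compressed distance is still $\Omega(\N) = \Omega(r^2) \gg 2r$ in expectation (e.g. bound the number of buckets receiving an even positive number of unmatched positions, or simply note $\E[\dH(\mathbf{u}',\mathbf{v}')]$ is minimized, over $m \ge 4r$, at a value that is $\Theta(\N)$). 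Choosing $c$ so that $\tfrac{3\cdot 4r}{4} = 3r$ and handling the large-$m$ tail pins down the constant hidden in $\N = O(r^2)$.

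The main obstacle is exactly this non-monotonicity of the naive second-moment-style bound in $m$: the bound $m - m^2/\N$ is only useful for $m$ not too large, so the proof must treat the ``moderately far'' and ``very far'' cases separately (or find a single cleaner estimate, e.g. $\E[\dH(\mathbf{u}',\mathbf{v}')] \ge \N\bigl(1 - (1-1/\N)^m\bigr) - (\text{correction for even collisions})$, and lower bound that uniformly for $m \ge 4r$). Everything else — the first bullet, linearity of expectation, and the elementary inequality $(1-1/\N)^{m-1} \ge 1 - (m-1)/\N$ — is routine.
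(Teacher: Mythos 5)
Your first bullet and your handling of the moderate range are correct, and the ``lonely unmatched positions'' first-moment bound $\E[\dH(\mathbf{u}',\mathbf{v}')]\ge m(1-1/\N)^{m-1}\ge m-m^2/\N$ is arguably a cleaner route through that range than the paper's own Case~1, which instead invokes the corrupted-pairs bound of Lemma~\ref{lem:compression} to get a probability-$2/3$ event and then takes expectations. Your overall structure (a case split on the number $m$ of unmatched positions, with the delicate regime being $m$ close to $4r$ and a separate argument for $m$ large) also mirrors the paper, which splits at $\tu=20r$.

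The gap is the large-$m$ case, which you leave as a sketch, and the one concrete claim you offer there is false. Note that $\E[\dH(\mathbf{u}',\mathbf{v}')]$ is exactly the expected number of buckets receiving an odd number of unmatched positions, which equals $\frac{\N}{2}\bigl(1-(1-2/\N)^{m}\bigr)$; this is \emph{increasing} in $m$, so its minimum over $m\ge 4r$ is attained at $m=4r$ and is $\Theta(r)$, not $\Theta(\N)$ (what is true, and what you actually need, is only that it is $\Theta(\N)$ once $m\ge \N/4$). The missing ingredient is precisely this parity computation, which is how the paper closes its far case: writing the odd and even bucket probabilities as binomial sums, their difference telescopes to $(1-2/\N)^{\tu}$, whence the expected number of odd buckets is at least $\frac{\N}{2}\bigl(1-\exp(-2\tu/\N)\bigr)$, which with $\N=8r^2$ and $\tu\ge 20r$ exceeds $4r^2(1-\exp(-5/r))>2r$. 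Your alternative suggestion of counting buckets that receive an even positive number of unmatched positions is essentially this computation, so the right idea is within reach, but as written the $m>\N/4$ regime is not proved and the ``minimized at $\Theta(\N)$'' shortcut does not rescue it.
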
\addtocounter{theorem}{-1}}
\begin{proof}
Let  $\tu$  denote the number of unmatched bit positions between $\mathbf{u}$ and $\mathbf{v}$.  
As mentioned earlier, if odd number of unmatched bit positions gets mapped to a 
particular bit in the compressed data, then that bit position corresponds 
to the Hamming distance $1$. Let we call that bit position as \textit{``odd-bit''} 
position. In order to give a bound on the Hamming distance in the compressed data 
we need to give a bound on number of such odd-bit positions. We first calculate the probability 
that a particular bit position say $k$-th position in the compressed data is odd. 
Let we denote this by ${\Pr}_{odd}^{(k)}$.
We  do it using the following binomial distribution:
\begin{align*}
 %&\Pr[kth ~\text{bit position is odd}]\\
 {\Pr}_{\mathrm{odd}}^{\mathrm{(k)}}&=\sum_{{i\bmod 2}=1}^{\tu} \frac{1}{\N^i}{\tu \choose i}\left(1-\frac{1}{\N}\right)^{\tu-1}.
\end{align*}
Similarly,  we compute the probability that the $k$-th bit is even:
\[
 {\Pr}_{\mathrm{even}}^{\mathrm{(k)}}=\sum_{{i\bmod 2}=0}^{\tu} \frac{1}{\N^i}{\tu \choose i}\left(1-\frac{1}{\N}\right)^{\tu-1}.
\]
We have,
\[
 {\Pr}_{\mathrm{even}}^{\mathrm{(k)}}+{\Pr}_{\mathrm{odd}}^{\mathrm{(k)}}=1. \numberthis\label{eq:eq1}
\]
Further,
\begin{align*}
 &{{\Pr}_{\mathrm{even}}^{\mathrm{(k)}}-{\Pr}_{\mathrm{odd}}^{\mathrm{(k)}}}\\
 &=\sum_{{i\bmod 2}=0}^{\tu} \frac{1}{\N^i}{\tu \choose i}\left(1-\frac{1}{\N}\right)^{\tu-1}\\&-\sum_{{i\bmod 2}=1}^{\tu} \frac{1}{\N^i}{\tu \choose i}\left(1-\frac{1}{\N}\right)^{\tu-1}\\
 &=\left(1-\frac{1}{\N} -\frac{1}{\N} \right)^{\tu}\\
 &=\left(1-\frac{2}{\N}\right)^{\tu}.\numberthis\label{eq:eq2}
\end{align*}
Thus, we have the following from Equation~\ref{eq:eq1} and Equation~\ref{eq:eq2} 
\begin{align*}
 {\Pr}_{\mathrm{odd}}^{\mathrm{(k)}}&=\frac{1}{2}\left(1-  \left(1-\frac{2}{\N}\right)^{\tu}\right)\\
 &\geq \frac{1}{2}\left(1- \exp\left(-\frac{2\tu}{\N}\right)\right).  \numberthis\label{eq:eq3}
\end{align*}
The last inequality follows as $(1-x)\leq e^x$ for $x<1.$
Thus expected number of odd-bits is at least $$\frac{\N}{2}\left(1- \exp\left(-\frac{2\tu}{\N}\right)\right).$$
We now split here in two cases: $1)$  $\tu<20r$, and $2)$  $\tu\geq 20r$. We address them one-by-one.

\textbf{Case 1:} $\tu< 20r$. 
We complete this case using Lemma~\ref{lem:compression}. It is easy to verify that in 
the case of Hamming distance the analysis of Lemma~\ref{lem:compression} also holds if 
we consider ``unmatched'' bits instead of ``active'' bits in the analysis. 
Thus, the probability that at least $r$ corrupted bit 
 position shared between $\mathbf{u'}$ and $\mathbf{v'}$  is at most $\left(\frac{2\tu}{\sqrt{\N}}\right)^{r}.$
We wish to set the value of $\N$ 
such that with probability at most $1/3$ that $\mathbf{u'}$ and $\mathbf{v'}$ share more than $r$ corrupted positions.
If we set the value of $\N=4{\tu^2}3^{\frac{2}{r}}$, then %by Lemma~\ref{lem:compression}, then 
the above
probability is at most $\left({\frac{2\tu}{\sqrt{4{\tu^2}3^{\frac{2}{r}}}}}\right)^r=\frac{1}{3}.$
Thus, when $\N=4{\tu^2}3^{\frac{2}{r}}=O(\tu^2)=O(r^2)$ as $\tu<20r$ and $r\geq 2$, with probability at most $1/3$, 
at most $r$ corrupted bits 
are shared between $\mathbf{u'}$ and $\mathbf{v'}$. As a consequence to this, we have $\E[\dH(\mathbf{u'}, \mathbf{v'})]>\frac{2}{3}.3r=2r.$

\textbf{Case 2:} $\tu\geq 20r$. 
We continue here from Equation~\ref{eq:eq3}
\begin{align*}
 &\text{Expected number of odd buckets}\\&\geq \frac{\N}{2}\left(1- \exp\left(-\frac{2\tu}{\N}\right)\right)\\
 &\geq \frac{\N}{2}\left(1- \exp\left(-\frac{40r}{\N}\right)\right)\\
 &= 4r^2\left(1- \exp\left(-\frac{5}{r}\right)\right)   \numberthis\label{eq:eq4}\\ 
 &> 4r^2\left(\frac{1}{2r}\right)  \numberthis\label{eq:eq5}\\
 &=2r.
 \end{align*}
 Equality~\ref{eq:eq4} follows by setting $\N=8r^2$ and
 Inequality~\ref{eq:eq5} holds as  $1-\exp\left(-\frac{5}{r}\right)>\frac{1}{2r}$ for $r \geq 2$.
 
 Finally, Case $1$ and Case $2$ complete a proof of the theorem.
 \end{proof}

\section{A compression scheme for high dimensional sparse real data}\label{sec:RealResult}
We first define our compression scheme for the real valued data.
\begin{definition}(\textbf{R}eal-valued \textbf{C}ompression \textbf{S}cheme)\label{defi:rcs}
 Let $\N$ be the number of buckets, for $i=1$ to $d$, we randomly assign 
 the $i$-th  position to the bucket number $b(i)$  $\in \{1, \ldots \N\}$. 
 Then, for $j=1 \text{~to~} \N$, the $j$-th coordinate of the compressed vector 
$\boldsymbol{\alpha}$ is computed as follows:
 %\[\mathbf{u}'[i] = \sum_{j : b(j) = i} \mathbf{u}[i]  \pmod 2.\]  %$u'[i]=\Sigma_{i} u[b^{-1}(i)] \mod 2.$
 \[\boldsymbol{\alpha}[j] = \sum_{i : b(i) = j} \mathbf{a}[i]x_i,\]  %$u'[i]=\Sigma_{i} u[b^{-1}(i)] \mod 2.$
where each $x_i$ is a random variable that takes a value between $\{-1, +1\}$ with probability $1/2.$
\end{definition}
\begin{note}
 For brevity we denote our  Real-valued Compression Scheme as $\RCS$. 
\end{note}
We first present our compression guarantee  for preserving Inner product  for a pair of real valued vectors. 
\begin{lem}\label{lem:innerprod}
 Consider two  vectors $\mathbf{a}, \mathbf{b} \in\R^d$, which get compressed into  
 vectors $\boldsymbol{\alpha}, \boldsymbol{\beta} \in \R^{\N}$  using the 
 $\RCS$.  If we set $\N=\frac{10\tm^2}{\epsilon^2}$, 
 where $\tm=\max\{||\mathbf{a}||^2, ||\mathbf{b}||^2\}$ and $\epsilon>0$, then the following holds, 
 \[
 \Pr\left[\left|\langle\boldsymbol{\alpha}, \boldsymbol{\beta}\rangle- \langle\mathbf{a}, \mathbf{b}\rangle \right|>\epsilon \right]<1/10.
 \]
\end{lem}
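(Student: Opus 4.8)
The plan is to show that $\langle\boldsymbol{\alpha},\boldsymbol{\beta}\rangle$ is an unbiased estimator of $\langle\mathbf{a},\mathbf{b}\rangle$ whose variance is $O(\tm^2/\N)$, and then invoke Chebyshev's inequality (Fact~\ref{fact:Chebyshev}). Writing $b(\cdot)$ for the random bucket map and $x_1,\dots,x_d\in\{-1,+1\}$ for the independent Rademacher signs used by the $\RCS$, a direct expansion gives
\[
\langle\boldsymbol{\alpha},\boldsymbol{\beta}\rangle=\sum_{j=1}^{\N}\Bigl(\sum_{i:b(i)=j}\mathbf{a}[i]x_i\Bigr)\Bigl(\sum_{k:b(k)=j}\mathbf{b}[k]x_k\Bigr)=\sum_{i,k\,:\,b(i)=b(k)}\mathbf{a}[i]\mathbf{b}[k]\,x_ix_k .
\]
The diagonal terms $i=k$ contribute $\sum_i\mathbf{a}[i]\mathbf{b}[i]x_i^2=\langle\mathbf{a},\mathbf{b}\rangle$ (since $x_i^2=1$ and $b(i)=b(i)$ trivially), so $\langle\boldsymbol{\alpha},\boldsymbol{\beta}\rangle=\langle\mathbf{a},\mathbf{b}\rangle+E$ with $E=\sum_{i\neq k}\mathbf{a}[i]\mathbf{b}[k]\,x_ix_k\,\mathbf{1}[b(i)=b(k)]$, the ``noise'' from collisions of distinct coordinates. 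Since the signs are mutually independent, mean zero, and independent of $b$, we have $\E[x_ix_k]=0$ for $i\neq k$, hence $\E[E]=0$ and the estimator is unbiased.

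The heart of the argument is bounding $\Var(\langle\boldsymbol{\alpha},\boldsymbol{\beta}\rangle)=\E[E^2]$. Expanding $E^2$ over ordered pairs $(i,k)$ and $(i',k')$ with $i\neq k$, $i'\neq k'$, and taking expectation over the signs first, $\E[x_ix_kx_{i'}x_{k'}]$ vanishes unless every index occurs an even number of times; under the constraints this leaves exactly the two disjoint cases $\{i=i',\,k=k'\}$ and $\{i=k',\,k=i'\}$, in each of which the surviving collision indicator satisfies $\E\bigl[\mathbf{1}[b(i)=b(k)]\bigr]=1/\N$. This yields
\[
\E[E^2]=\frac{1}{\N}\Bigl(\sum_{i\neq k}\mathbf{a}[i]^2\mathbf{b}[k]^2+\sum_{i\neq k}\mathbf{a}[i]\mathbf{b}[i]\,\mathbf{a}[k]\mathbf{b}[k]\Bigr)\le\frac{1}{\N}\bigl(\|\mathbf{a}\|^2\|\mathbf{b}\|^2+\langle\mathbf{a},\mathbf{b}\rangle^2\bigr)\le\frac{2\tm^2}{\N},
\]
using Cauchy--Schwarz and $\|\mathbf{a}\|^2,\|\mathbf{b}\|^2\le\tm$. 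The same bound can alternatively be reached by writing $\langle\boldsymbol{\alpha},\boldsymbol{\beta}\rangle=\sum_j\boldsymbol{\alpha}[j]\boldsymbol{\beta}[j]$ and applying Facts~\ref{fact:varProp1} and~\ref{fact:coVarProp} bucket by bucket, but the direct expansion is cleaner.

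Finally, plugging in $\N=10\tm^2/\epsilon^2$ makes the variance a small constant multiple of $\epsilon^2$, so Chebyshev's inequality applied with deviation $\epsilon=\lambda\sigma$ gives $\Pr\bigl[|\langle\boldsymbol{\alpha},\boldsymbol{\beta}\rangle-\langle\mathbf{a},\mathbf{b}\rangle|>\epsilon\bigr]\le\Var/\epsilon^2$, which is the stated $1/10$ for the chosen parameters (the exact constant in $\N$ is tuned to this end). The step I expect to be the main obstacle is the second-moment bookkeeping: correctly pinning down which quadruples of indices survive the sign-expectation, and cleanly disentangling the two independent sources of randomness — the bucketing $b$ and the signs $x_i$. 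Everything after that is routine.
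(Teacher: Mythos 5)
Your proposal follows the same overall route as the paper --- show that $\langle\boldsymbol{\alpha},\boldsymbol{\beta}\rangle$ is an unbiased estimator of $\langle\mathbf{a},\mathbf{b}\rangle$, bound its variance by $O(\tm^2/\N)$, and apply Chebyshev --- but your second-moment bookkeeping is in fact more careful than the paper's. The paper writes the error as $\sum_k\sum_{i\neq j}\xi_{ij}^{(k)}$ with $\xi_{ij}^{(k)}=a_ib_jx_ix_jz_i^{(k)}z_j^{(k)}$ and asserts that every cross-covariance $\Cov[\sum_k\xi_{ij}^{(k)},\sum_k\xi_{i'j'}^{(k)}]$ vanishes because $\E[x_ix_jx_{i'}x_{j'}]=0$; this fails for the transposed pair $(i',j')=(j,i)$ (which satisfies the constraints of their sum), since there $x_ix_jx_{i'}x_{j'}=x_i^2x_j^2=1$. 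That pair contributes exactly the second sum $\frac{1}{\N}\sum_{i\neq k}a_ib_i\,a_kb_k$ that your quadruple analysis correctly isolates, so your bound $\Var\leq\left(||\mathbf{a}||^2||\mathbf{b}||^2+\langle\mathbf{a},\mathbf{b}\rangle^2\right)/\N\leq 2\tm^2/\N$ is the right one, while the paper's $\tm^2/\N$ silently drops a nonnegative term. The one loose end on your side is quantitative: with $\N=10\tm^2/\epsilon^2$ and variance $2\tm^2/\N$, Chebyshev yields $1/5$, not the stated $1/10$; as you anticipate, the constant in $\N$ must simply be retuned (to $20\tm^2/\epsilon^2$), which changes nothing asymptotically. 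Everything else --- the independence of the signs from the bucketing, the identification of the two surviving index patterns as disjoint, $\E[\mathbf{1}[b(i)=b(k)]]=1/\N$, and the Cauchy--Schwarz step --- is sound.
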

\begin{proof}
 Let we have two vectors $\mathbf{a}, \mathbf{b}\in \R^d$ such that $\mathbf{a}=[a_1, a_2,\ldots a_d]$ and 
 $\mathbf{b}=[b_1, b_2,\ldots b_d]$. Let $\{x_i\}_{i=1}^d$ be a set of $d$ random variables such that each 
 $x_i$ takes a value between $\{-1, +1\}$ with probability $1/2$,  $z_i^{(k)}$ be a random variable that takes 
 the value $1$ if $i$-th dimension of the vector is mapped to the $k$-th bucket of the compressed vector and $0$ otherwise.
 Using the  compression scheme $\RCS$, let vectors $\mathbf{a}, \mathbf{b}$ get 
 compressed into vectors $\boldsymbol{\alpha}$ and  $\boldsymbol{\beta}$,  
 where $\boldsymbol{\alpha}=[\alpha_1, .. \alpha_k, .. \alpha_{\N}]$ such that  $\alpha_k=\Sigma_{i=1}^da_ix_iz_i^{(k)}$,
 and $\boldsymbol{\beta}=[\beta_1, .. \beta_k, .. \beta_{\N}]$ such that  $\beta_k=\Sigma_{i=1}^db_ix_iz_i^{(k)}.$
 We now compute the inner product of the compressed vectors $\langle\boldsymbol{\alpha}, \boldsymbol{\beta}\rangle$.
 \begin{align*}
  \langle\boldsymbol{\alpha}, \boldsymbol{\beta}\rangle&=\sum_{k=1}^{\N} \alpha_k \beta_k
  =\sum_{k=1}^{\N} \left( \Sigma_{i=1}^da_ix_iz_i^{(k)}  \right)\left( \Sigma_{i=1}^db_ix_iz_i^{(k)}  \right)\\
  &=\sum_{k=1}^{\N} \left( \Sigma_{i=1}^da_ib_ix_i^2{z_i^{(k)}}^2 + \Sigma_{i\neq j}a_ib_jx_ix_jz_i^{(k)} z_j^{(k)} \right)\numberthis\label{eq:eq10}\\
  &=\sum_{k=1}^{\N} \left( \Sigma_{i=1}^da_ib_i{z_i^{(k)}} + \Sigma_{i\neq j}a_ib_jx_ix_jz_i^{(k)} z_j^{(k)} \right)\numberthis\label{eq:eq11}\\
 &= \Sigma_{i=1}^da_ib_i\sum_{k=1}^{\N}{z_i^{(k)}} +\sum_{k=1}^{\N} \left(\Sigma_{i\neq j}a_ib_jx_ix_jz_i^{(k)} z_j^{(k)} \right)\\
  &=\Sigma_{i=1}^da_ib_i +\sum_{k=1}^{\N} \left(  \Sigma_{i\neq j}a_ib_jx_ix_jz_i^{(k)} z_j^{(k)} \right)\\
 &= \langle\mathbf{a}, \mathbf{b}\rangle +\sum_{k=1}^{\N} \left(  \Sigma_{i\neq j}a_ib_jx_ix_jz_i^{(k)} z_j^{(k)} \right).\numberthis\label{eq:eq12}
 \end{align*}
 Equation~\ref{eq:eq11} follows from Equation~\ref{eq:eq10} because  $x_i^2=1$ as $x_i=\pm1$, and $z_i^2=z_i$ as $z_i$ 
 takes value either $1$ or $0.$  We continue from Equation~\ref{eq:eq12} and 
 compute the Expectation and the Variance of the random variable 
 $\langle\boldsymbol{\alpha}, \boldsymbol{\beta}\rangle$. We first compute the Expectation of the 
 random variable $\langle\boldsymbol{\alpha}, \boldsymbol{\beta}\rangle$
 as follows:
 \begin{align*}
  \E[\langle\boldsymbol{\alpha}, \boldsymbol{\beta}\rangle]&=\E\left[\langle\mathbf{a}, \mathbf{b}\rangle +\sum_{k=1}^{\N} \left(  \Sigma_{i\neq j}a_ib_jx_ix_jz_i^{(k)} z_j^{(k)} \right)\right]\\
 &=\E [\langle\mathbf{a}, \mathbf{b}\rangle] +\E\left[\sum_{k=1}^{\N} \left(  \Sigma_{i\neq j}a_ib_jx_ix_jz_i^{(k)} z_j^{(k)} \right)   \right]\\
 &=\langle\mathbf{a}, \mathbf{b}\rangle +\sum_{k=1}^{\N} \left(  \Sigma_{i\neq j}\E[a_ib_jx_ix_jz_i^{(k)} z_j^{(k)}] \right) \numberthis\label{eq:eq13}  \\
 &=\langle\mathbf{a}, \mathbf{b}\rangle +\sum_{k=1}^{\N} \left(  \Sigma_{i\neq j}a_ib_j\E[x_ix_jz_i^{(k)} z_j^{(k)}] \right)   \\
 &=\langle\mathbf{a}, \mathbf{b}\rangle. \numberthis\label{eq:eq14}
 \end{align*}
 Equation~\ref{eq:eq13} holds due to the linearity of expectation. 
 Equation~\ref{eq:eq14} holds because $\E[x_ix_jz_i^{(k)} z_j^{(k)}]=0$ 
 as both $x_i$ and $x_j$ take a value between $\{-1, +1\}$ each with probability $0.5$ which leads to $\E[x_ix_j]=0$.
 
 We now compute the Variance  of the random variable $\langle\boldsymbol{\alpha}, \boldsymbol{\beta}\rangle$ as follows:
 \begin{align*}
  \Var[\langle\boldsymbol{\alpha}, \boldsymbol{\beta}\rangle]&=\Var\left[\langle\mathbf{a}, \mathbf{b}\rangle +\sum_{k=1}^{\N} \left(  \Sigma_{i\neq j}a_ib_jx_ix_jz_i^{(k)} z_j^{(k)} \right)\right]\\
 &=\Var\left[\sum_{k=1}^{\N} \left(  \Sigma_{i\neq j}a_ib_jx_ix_jz_i^{(k)} z_j^{(k)} \right)\right]\numberthis\label{eq:eq14}\\
 &=\Var\left[\sum_{k=1}^{\N} \left(  \Sigma_{i\neq j}\xi_{ij}^{(k)} \right)\right]\numberthis\label{eq:eq17}\\
 &=\Var\left[\sum_{i\neq j} \sum_{k=1}^{\N}\xi_{ij}^{(k)} \right]\\
 &=\sum_{i\neq j}\Var\left[\sum_{k=1}^{\N}\xi_{ij}^{(k)} \right]\ldots \\&+\sum_{i\neq j, i'\neq j', i\neq i', j\neq j'}\Cov\left[\sum_{k=1}^{\N}\xi_{ij}^{(k)}, \sum_{k=1}^{\N}\xi_{i'j'}^{(k)} \right]\numberthis\label{eq:eq16}
 \end{align*}
 Equation~\ref{eq:eq14} holds due to Fact~\ref{fact:varProp}; Equation~\ref{eq:eq17} holds as 
 we denote the expression $a_ib_jx_ix_jz_i^{(k)} z_j^{(k)}$ by the variable $\xi_{ij}^{(k)}$; 
 Equation~\ref{eq:eq16} holds due to Fact~\ref{fact:varProp1}. We now bound the values of the two terms 
 of Equation~\ref{eq:eq16}.
 \begin{align*}
  \sum_{i\neq j}\Var\left[\sum_{k=1}^{\N}\xi_{ij}^{(k)} \right]&=\sum_{i\neq j}\sum_{k=1}^{\N}\Var\left[\xi_{ij}^{(k)} \right]\ldots \\&~~~~+\sum_{i\neq j}\sum_{k\neq l}\Cov\left[\xi_{ij}^{(k)}, \xi_{ij}^{(l)} \right]\numberthis\label{eq:eq18}
 \end{align*}
 Equation~\ref{eq:eq18} holds due to Fact~\ref{fact:varProp1}. 
 We  bound the values of two terms 
 of Equation~\ref{eq:eq18} one by one as follows.
 \begin{align*}
 & \sum_{i\neq j}\sum_{k=1}^{\N}\Var\left[\xi_{ij}^{(k)} \right]=\sum_{i\neq j}\sum_{k=1}^{\N}\Var\left[a_ib_jx_ix_jz_i^{(k)} z_j^{(k)}\right]\\
  &=\sum_{i\neq j}a_i^2b_j^2\sum_{k=1}^{\N}\Var\left[x_ix_jz_i^{(k)} z_j^{(k)}\right]\numberthis\label{eq:eq19}\\
  &=\sum_{i\neq j}a_i^2b_j^2\sum_{k=1}^{\N}  \left( \E[x_i^2x_j^2{z_i^{(k)}}^2 {z_j^{(k)}}^2]-\E[x_ix_jz_i^{(k)} z_j^{(k)}]^2\right)\numberthis\label{eq:eq20}\\
 &=\sum_{i\neq j}a_i^2b_j^2\sum_{k=1}^{\N}   \E\left[{z_i^{(k)}} {z_j^{(k)}}\right]\numberthis\label{eq:eq21}\\
 &=\sum_{i\neq j}a_i^2b_j^2/\N\leq ||\textbf{a}||^2||\textbf{b}||^2/\N.\numberthis\label{eq:eq22}
 \end{align*}
 Equation~\ref{eq:eq19} holds due to Fact~\ref{fact:varProp}; Equation~\ref{eq:eq20} holds due to Definition~\ref{definition:varDef};
 Equation~\ref{eq:eq21} holds as $x_i^2, x_j^2=1$,  ${z_i^{(k)}}^2={z_i^{(k)}}$, and $\E[x_ix_j]=0$; finally, 
 Equation~\ref{eq:eq22} holds as $\sum_{i\neq j}a_i^2b_j^2\leq \sum_{i}a_i^2\sum_{i}b_i^2=||\textbf{a}||^2||\textbf{b}||^2.$ 

 We now bound the second term of Equation~\ref{eq:eq18}.
 \begin{align*}
  &\Cov\left[\xi_{ij}^{(k)}, \xi_{ij}^{(l)} \right]\\&=\Cov\left[a_ib_jx_ix_jz_i^{(k)} z_j^{(k)}, a_ib_jx_ix_jz_i^{(l)} z_j^{(l)} \right]\\
  &={a_i}^2{b_j}^2\Cov\left[x_ix_jz_i^{(k)} z_j^{(k)}, x_ix_jz_i^{(l)} z_j^{(l)} \right]\numberthis\label{eq:eq23}\\
  &={a_i}^2{b_j}^2\E[(x_ix_jz_i^{(k)} z_j^{(k)}-\E(x_ix_jz_i^{(k)} z_j^{(k)}))\\&~~~~~~~~~~~~~~(x_ix_jz_i^{(l)} z_j^{(l)}-\E(x_ix_jz_i^{(l)} z_j^{(l)})) ]\numberthis\label{eq:eq24}\\
 %&=a_ib_j\E\left[({X_{ij}}^2{Z_{ij}^{(k)}}^2)\right]\\
 &={a_i}^2{b_j}^2\E\left[{x_i}^2{x_j}^2 z_i^{(k)} z_j^{(k)} z_i^{(l)} z_j^{(l)} \right]\numberthis\label{eq:eq25}\\
 &={a_i}^2{b_j}^2\E\left[z_i^{(k)} z_j^{(k)} z_i^{(l)} z_j^{(l)}\right]=0\numberthis\label{eq:eq26}
 \end{align*}
 Equation~\ref{eq:eq23} holds due to Fact~\ref{fact:coVarProp}; Equation~\ref{eq:eq24} holds 
 due to Definition~\ref{definition:coVarDef}; Equation~\ref{eq:eq25} holds as $\E(x_ix_j)=0$; finally, 
 Equation~\ref{eq:eq26} holds as in our compression scheme each 
 dimension of the input is mapped to a unique coordinate (bucket) in the compressed vector
 which implies that at least one of  the random variable  between $z_i^{(k)}$ and  $z_j^{(k)}$ has to be zero.
 
 We now bound the second term of Equation~\ref{eq:eq16}.
 \begin{align*}
 & \Cov\left[\sum_{k=1}^{\N}\xi_{ij}^{(k)}, \sum_{k=1}^{\N}\xi_{i'j'}^{(k)} \right]\\
  &=\E\left[\left(\sum_{k=1}^{\N}\xi_{ij}^{(k)}-\E(\sum_{k=1}^{\N}\xi_{ij}^{(k)}) \right)  \left(\sum_{k=1}^{\N}\xi_{i'j'}^{(k)}-\E(\sum_{k=1}^{\N}\xi_{i'j'}^{(k)}) \right) \right]\\
 &=\E\left[(\sum_{k=1}^{\N}\xi_{ij}^{(k)})(\sum_{k=1}^{\N}\xi_{i'j'}^{(k)}) \right]\numberthis\label{eq:eq27}\\
  &=\E\left[\left(\sum_{k=1}^{\N}a_ib_jx_ix_j{z_i}^{(k)}{z_j}^{(k)}\right)\left(\sum_{k=1}^{\N}a_{i'}b_{j'}x_{i'}x_{j'}{z_{i'}}^{(k)} {z_{j'}}^{(k)}\right)\right]\\
  &=a_ib_ja_{i'}b_{j'}\E \left[x_ix_jx_{i'}x_{j'}\left(\sum_{k=1}^{\N}{z_i}^{(k)}{z_j}^{(k)}\right)\left(\sum_{k=1}^{\N}{z_{i'}}^{(k)} {z_{j'}}^{(k)}\right)\right]\\
 &=0 \numberthis\label{eq:eq28}
 \end{align*}
 Equation~\ref{eq:eq27} holds as $\E(\sum_{k=1}^{\N}\xi_{ij}^{(k)})$ and $\E(\sum_{k=1}^{\N}\xi_{i'j'}^{(k)})$ is equal to zero because 
$$\E(\sum_{k=1}^{\N}\xi_{ij}^{(k)})=\sum_{k=1}^{\N}\E(\xi_{ij}^{(k)})=\sum_{k=1}^{\N}\E(a_ib_jx_ix_jz_i^{(k)} z_j^{(k)})=0.$$ 
 A similar argument follows for the other term as well. Equation~\ref{eq:eq28} holds as $\E[x_ix_jx_{i'}x_{j'}]$ is equal
to zero because each variable in the expectation term takes a value between $+1$ and $-1$ with probability $0.5.$

 Thus, we have 
$$\E[\langle\boldsymbol{\alpha}, \boldsymbol{\beta}\rangle]=\langle\mathbf{a}, \mathbf{b}\rangle,$$ and 
Equation~\ref{eq:eq16} in conjunction with Equations~\ref{eq:eq18}, \ref{eq:eq22}, \ref{eq:eq26}, \ref{eq:eq28} gives
$$\Var[\langle\boldsymbol{\alpha}, \boldsymbol{\beta}\rangle]\leq ||\textbf{a}||^2||\textbf{b}||^2/\N\leq {\tm}^2/{\N},$$
where $\tm=\max\{||\mathbf{a}||^2, ||\mathbf{b}||^2\}$.

Thus, by Chebyshev's inequality (see Fact~\ref{fact:Chebyshev}), we have 
\[
 \Pr\left[\left|\langle\boldsymbol{\alpha}, \boldsymbol{\beta}\rangle- \langle\mathbf{a}, \mathbf{b}\rangle \right|>\epsilon \right]<\frac{\tm^2}{\epsilon^2\N}=1/10.
 \]
 The last inequality follows as  we set $\N=\frac{10\tm^2}{\epsilon^2}$.
\end{proof}

Using a similar analysis we can generalize our result for $k$-way inner product. 
We state our result as follows:
{\renewcommand{\thetheorem}{\ref{theorem:compressionRealKway}}
\begin{theorem}
 Consider a set of $k$ vectors $\{\mathbf{a_i}\}_{i=1}^k\in \R^d$, 
  which get compressed into vectors $\{\boldsymbol{\alpha_i}\}_{i=1}^k \in \R^{\N}$   using the $\RCS$. 
   If we set $\N=\frac{10\tm^k}{\epsilon^2}$, 
 where $\tm=\max\{||{\mathbf{a_i}||^2}\}_{i=1}^k$ and $\epsilon>0$, then the following holds 
 \[
 \Pr\left[\left|\langle\boldsymbol{\alpha}_1\boldsymbol{\alpha}_2\ldots\boldsymbol{\alpha}_k\rangle- \langle\mathbf{a}_1\mathbf{a}_2\ldots\mathbf{a}_k\rangle \right|>\epsilon \right]<1/10.
 \]
\end{theorem}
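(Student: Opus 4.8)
The plan is to follow the proof of Lemma~\ref{lem:innerprod} essentially verbatim, replacing the bilinear estimator by its $k$-linear analogue; throughout I would write $\ell$ for the bucket index to avoid clashing with the number of vectors $k$. Using the $\RCS$ with a single shared bucket map $b(\cdot)$ — encoded by indicators $z_i^{(\ell)}$ with $z_i^{(\ell)}=1$ iff $b(i)=\ell$ — and a single shared sign vector $x_i\in\{\pm1\}$ applied to all $k$ inputs, so that $\boldsymbol{\alpha}_m[\ell]=\sum_i \mathbf{a}_m[i]\,x_i\,z_i^{(\ell)}$, I would expand
\[
\langle\boldsymbol{\alpha}_1\cdots\boldsymbol{\alpha}_k\rangle=\sum_{\ell=1}^{\N}\;\sum_{i_1,\ldots,i_k}\;\Bigl(\prod_{m=1}^k \mathbf{a}_m[i_m]\Bigr)\Bigl(\prod_{m=1}^k x_{i_m}\Bigr)\Bigl(\prod_{m=1}^k z_{i_m}^{(\ell)}\Bigr)
\]
and group the summands by the partition that $(i_1,\ldots,i_k)$ induces on $\{1,\ldots,k\}$; the ``fully diagonal'' term $i_1=\cdots=i_k$ plays the role of the $i=j$ term in the $k=2$ proof.

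Second, I would compute $\E[\langle\boldsymbol{\alpha}_1\cdots\boldsymbol{\alpha}_k\rangle]$. Using $x_i^2=1$, $(z_i^{(\ell)})^2=z_i^{(\ell)}$, $\sum_\ell z_i^{(\ell)}=1$, and $\E[x_i]=0$ with the $x_i$ independent, every monomial in which some coordinate occurs to an odd multiplicity vanishes in expectation, so the fully diagonal term survives and contributes $\sum_i\bigl(\prod_m\mathbf{a}_m[i]\bigr)\E[x_i^k]$; for even $k$ this is $\langle\mathbf{a}_1\cdots\mathbf{a}_k\rangle$ exactly, while for odd $k$ one must either restrict to even $k$ or tweak the scheme, a point I would flag. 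Any surviving non-diagonal ``all-even-multiplicity'' monomials are $O(\tm^{k-2}/\N)$ and are absorbed into the final deviation bound.

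Third, and this is where essentially all the work lies, I would bound $\Var[\langle\boldsymbol{\alpha}_1\cdots\boldsymbol{\alpha}_k\rangle]$. Exactly as in Lemma~\ref{lem:innerprod}, I would subtract the constant $\langle\mathbf{a}_1\cdots\mathbf{a}_k\rangle$, write the centered estimator as a sum of monomial random variables $\xi$ indexed by (index pattern, bucket $\ell$), and expand via Facts~\ref{fact:varProp}, \ref{fact:varProp1} and \ref{fact:coVarProp}. The combinatorics is heavier than in the $k=2$ case (many more index patterns and cross-pairs), but the cancellation mechanism is identical: $\Cov[\xi,\xi']$ is nonzero only when every sign variable in the combined monomial appears to an even power and the bucket-collision constraints are jointly satisfiable, and forcing $t$ distinct coordinates into a single common bucket contributes a factor $1/\N^{\,t-1}$ from $\E[\prod z]$. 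Collecting the surviving contributions and telescoping $\sum_{i\neq j}\mathbf{a}_m[i]^2\mathbf{a}_{m'}[j]^2\le\|\mathbf{a}_m\|^2\|\mathbf{a}_{m'}\|^2\le\tm^2$ over the $k$ factors (mirroring Equation~\ref{eq:eq22}) should yield $\Var[\langle\boldsymbol{\alpha}_1\cdots\boldsymbol{\alpha}_k\rangle]\le\tm^k/\N$, the slack in the constant coming from the lower-order $O(1/\N)$-weighted patterns.

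Finally, Chebyshev's inequality (Fact~\ref{fact:Chebyshev}) with mean $\langle\mathbf{a}_1\cdots\mathbf{a}_k\rangle$ and variance at most $\tm^k/\N$ gives $\Pr[\,|\langle\boldsymbol{\alpha}_1\cdots\boldsymbol{\alpha}_k\rangle-\langle\mathbf{a}_1\cdots\mathbf{a}_k\rangle|>\epsilon\,]<\tm^k/(\epsilon^2\N)$, which equals $1/10$ once $\N=10\tm^k/\epsilon^2$, matching the statement. I expect the only real difficulty to be keeping the variance expansion organized, so I would structure it by classifying the index patterns (set partitions of $\{1,\ldots,k\}$, further refined by which blocks share a bucket) rather than by tracking individual $\xi$ terms.
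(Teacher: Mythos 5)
Your proposal takes essentially the same route as the paper: the paper's entire proof of this theorem is the single remark that ``a similar analysis'' to Lemma~\ref{lem:innerprod} applies, and your expand--take-expectation--bound-variance--apply-Chebyshev outline is precisely that analysis carried out for the $k$-linear estimator with shared signs $x_i$ and shared bucket indicators $z_i^{(\ell)}$. The two caveats you flag --- that for odd $k$ the fully diagonal term carries $\E[x_i^k]=\E[x_i]=0$ so the estimator is not even unbiased, and that off-diagonal all-even-multiplicity index patterns introduce an $O(1/\N)$ bias that Chebyshev about the true mean does not by itself absorb --- are genuine issues that the paper's one-line proof silently glosses over, so your version is, if anything, more careful than the original.
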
\addtocounter{theorem}{-1}}
We can also generalize the result of Lemma~\ref{lem:innerprod} for Euclidean distance as well. 
Consider a pair of vectors $\mathbf{a}, \mathbf{b}\in \R^d$ which get compressed into vectors 
$\boldsymbol{\alpha}, \boldsymbol{\beta}\in  \R^{\N}$ using the  compression scheme
$\RCS$. %Following the similar notation of Lemma~\ref{lem:innerprod}, 
%the $k$-th coordinate of the compressed vectors  $\boldsymbol{\alpha}, \boldsymbol{\beta}$ are as follows:  $\alpha_k=\Sigma_{i=1}^da_ix_iz_i^{(k)}$, and $\beta_k=\Sigma_{i=1}^db_ix_iz_i^{(k)}$.
 Let $||\boldsymbol{\alpha}, \boldsymbol{\beta}||^2$  denote the squared euclidean distance between 
 the vectors $\boldsymbol{\alpha}, \boldsymbol{\beta}$. %, then clearly
 % is random variable and let it denote  between random variables (vectors) $\boldsymbol{\alpha}, \boldsymbol{\beta}$. 
 Using a similar analysis of 
 Lemma~\ref{lem:innerprod} we can compute Expectation and Variance of the random variable $||\boldsymbol{\alpha}, \boldsymbol{\beta}||^2$
 \[
 \E[||\boldsymbol{\alpha}, \boldsymbol{\beta}||^2]=||\mathbf{a}, \mathbf{b}||^2, 
 \]
and 
\[
 \Var[||\boldsymbol{\alpha}, \boldsymbol{\beta}||^2]\leq \frac{(||a||^2-||b||^2)^2}{\N}\leq \frac{\tm^2}{\N}, 
\]
where $\tm=\max\{||\mathbf{a}||^2, ||\mathbf{b}||^2\}$. 
Thus, due to Chebyshev's inequality (see Fact~\ref{fact:Chebyshev}), we have the following result for Euclidean distance.
\begin{theorem}\label{theorem:compressionEuclidean}
Consider two  vectors $\mathbf{a}, \mathbf{b} \in\R^d$, which get compressed into  
 vectors $\boldsymbol{\alpha}, \boldsymbol{\beta} \in \R^{\N}$  using the 
 $\RCS$.  If we set $\N=\frac{10\tm^2}{\epsilon^2}$, 
 where $\tm=\max\{||\mathbf{a}||^2, ||\mathbf{b}||^2\}$ and $\epsilon>0$, then the following holds 
 \[
 \Pr\left[\left|||\boldsymbol{\alpha}, \boldsymbol{\beta}||^2- ||\mathbf{a}, \mathbf{b}||^2 \right|>\epsilon \right]<1/10.
 \]
 \end{theorem}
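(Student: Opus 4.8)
The plan is to obtain this as a corollary of Lemma~\ref{lem:innerprod} using the linearity of $\RCS$. For any fixed bucket map $b(\cdot)$ and fixed signs $\{x_i\}_{i=1}^d$, the assignment $\mathbf{a}\mapsto\boldsymbol{\alpha}$ with $\boldsymbol{\alpha}[k]=\sum_{i:b(i)=k}\mathbf{a}[i]x_i$ is linear, so when $\boldsymbol{\alpha},\boldsymbol{\beta}$ are the compressions of $\mathbf{a},\mathbf{b}$ drawn with the \emph{same} randomness, $\boldsymbol{\gamma}:=\boldsymbol{\alpha}-\boldsymbol{\beta}$ is precisely the $\RCS$-compression of $\mathbf{c}:=\mathbf{a}-\mathbf{b}$. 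Consequently $||\boldsymbol{\alpha},\boldsymbol{\beta}||^2=||\boldsymbol{\gamma}||^2=\langle\boldsymbol{\gamma},\boldsymbol{\gamma}\rangle$, which is the self inner product of the compression of a single vector --- exactly the instance of Lemma~\ref{lem:innerprod} in which both input vectors are $\mathbf{c}$. Black-boxing that lemma with the pair $(\mathbf{c},\mathbf{c})$ already yields the statement, with $\N=\frac{10\,||\mathbf{a}-\mathbf{b}||^4}{\epsilon^2}$.

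To recover the form of $\N$ claimed in the theorem I would instead re-run the two moment computations of Lemma~\ref{lem:innerprod} directly for $\langle\boldsymbol{\gamma},\boldsymbol{\gamma}\rangle$. Using $x_i^2=1$, $(z_i^{(k)})^2=z_i^{(k)}$ and $\sum_k z_i^{(k)}=1$, the expansion of Equation~\ref{eq:eq12} becomes $\langle\boldsymbol{\gamma},\boldsymbol{\gamma}\rangle=||\mathbf{c}||^2+\sum_{k=1}^{\N}\Sigma_{i\neq j}c_ic_jx_ix_jz_i^{(k)}z_j^{(k)}$; since $\E[x_ix_j]=0$ for $i\neq j$, every cross term has mean zero and $\E[||\boldsymbol{\gamma}||^2]=||\mathbf{c}||^2=||\mathbf{a},\mathbf{b}||^2$, the claimed unbiasedness. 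For the variance I would repeat the decomposition of Equations~\ref{eq:eq16}--\ref{eq:eq28}: covariances between terms that share a bucket but involve distinct unordered index pairs vanish because at least one of the four independent signs appears to an odd power; covariances between $\xi_{ij}^{(k)}$ and $\xi_{ij}^{(l)}$ for $k\neq l$ vanish because each coordinate lands in a unique bucket; and the surviving contribution is a $\Theta(1)$ multiple of $\sum_{i\neq j}c_i^2c_j^2/\N\le ||\mathbf{c}||^4/\N=||\mathbf{a}-\mathbf{b}||^4/\N$. The only bookkeeping point that differs from the proof of Lemma~\ref{lem:innerprod} is that, because $c_ic_j=c_jc_i$, the terms $\xi_{ij}^{(k)}$ and $\xi_{ji}^{(k)}$ coincide, so the sum over ordered pairs double-counts each unordered pair; writing $\Sigma_{i\neq j}=2\Sigma_{i<j}$ and tracking the extra factor keeps the count honest.

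Finally, Chebyshev's inequality (Fact~\ref{fact:Chebyshev}) applied to $||\boldsymbol{\gamma}||^2$ with mean $||\mathbf{a},\mathbf{b}||^2$ gives $\Pr[\,|\,||\boldsymbol{\alpha},\boldsymbol{\beta}||^2-||\mathbf{a},\mathbf{b}||^2\,|>\epsilon\,]\le \Var[||\boldsymbol{\gamma}||^2]/\epsilon^2$, and $\N=\frac{10\tm^2}{\epsilon^2}$ makes the right-hand side at most $\frac{1}{10}$ once $\Var[||\boldsymbol{\gamma}||^2]\le \tm^2/\N$. I expect this last inequality to be the crux: the computation delivers $||\mathbf{a}-\mathbf{b}||^4/\N$ (up to the harmless correction $-\sum_i c_i^4$ and a small constant), so matching the form $(||\mathbf{a}||^2-||\mathbf{b}||^2)^2/\N$ and the final bound $\tm^2/\N$ in the statement requires either a sharper estimate on $\sum_{i\neq j}c_i^2c_j^2$ or the implicit hypothesis that $||\mathbf{a}-\mathbf{b}||^2$ is dominated by $\tm=\max\{||\mathbf{a}||^2,||\mathbf{b}||^2\}$; pinning down the admissible parameter range is the real content, after which the proof is a mechanical transcription of Lemma~\ref{lem:innerprod}.
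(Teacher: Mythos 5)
Your approach is the paper's approach: the paper's ``proof'' of Theorem~\ref{theorem:compressionEuclidean} consists entirely of the assertion that a computation ``similar'' to Lemma~\ref{lem:innerprod} yields $\E[||\boldsymbol{\alpha}, \boldsymbol{\beta}||^2]=||\mathbf{a}, \mathbf{b}||^2$ and $\Var[||\boldsymbol{\alpha}, \boldsymbol{\beta}||^2]\leq (||\mathbf{a}||^2-||\mathbf{b}||^2)^2/\N\leq \tm^2/\N$, followed by Chebyshev. Your reduction via linearity --- $\boldsymbol{\alpha}-\boldsymbol{\beta}$ is exactly the $\RCS$ image of $\mathbf{c}=\mathbf{a}-\mathbf{b}$ under the same bucket map and signs, so the theorem is the self-inner-product instance of the lemma --- is a cleaner packaging of that same moment computation, not a genuinely different route, and it is correct.

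The substantive point is the one you flag at the end, and you are right to be suspicious: the paper's intermediate variance bound $(||\mathbf{a}||^2-||\mathbf{b}||^2)^2/\N$ is not what the computation produces, and it is in fact false. Take $\mathbf{a}=(1,0)$ and $\mathbf{b}=(0,1)$ in $\R^2$; then $(||\mathbf{a}||^2-||\mathbf{b}||^2)^2=0$, yet $||\boldsymbol{\alpha},\boldsymbol{\beta}||^2$ equals $2$ unless the two coordinates collide (probability $1/\N$), in which case it is $0$ or $4$ with equal probability, so its variance is $4/\N\neq 0$. The honest bound is the one your computation delivers, $\Theta(1)\cdot\sum_{i\neq j}c_i^2c_j^2/\N\leq O\left(||\mathbf{a}-\mathbf{b}||^4\right)/\N$; and the ``admissible parameter range'' you worry about needs no extra hypothesis, since $||\mathbf{a}-\mathbf{b}||^2\leq 2(||\mathbf{a}||^2+||\mathbf{b}||^2)\leq 4\tm$ gives $||\mathbf{a}-\mathbf{b}||^4\leq 16\tm^2$ unconditionally. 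So the theorem survives, but only after inflating the constant in $\N$ by the resulting factor (with $\N=10\tm^2/\epsilon^2$ as stated, Chebyshev with the correct variance does not give $1/10$). Your factor-of-two bookkeeping for $\xi_{ij}^{(k)}=\xi_{ji}^{(k)}$ is also a correction the paper itself silently omits (its Equation~\ref{eq:eq28} wrongly treats the pair $(i',j')=(j,i)$, for which $\E[x_ix_jx_{i'}x_{j'}]=1$, as contributing zero covariance). In short: your proof is right, and where it disagrees with the paper, the paper is the one at fault, up to constants absorbed by $O(\cdot)$.
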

\begin{rem}
In order to compress a pair of data points our scheme requires $O(d\log \N)$ randomness,  
which grows logarithmically in the compression length, whereas  the 
other schemes require  randomness which  grows linearly in the compression length.
%while the other schemes require randomness $O(d\N)$.  
Thus, when the number 
of points are small (constant),  
then for preserving a pairwise Inner product or Euclidean distance, 
we have a clear advantage on  the amount of  
randomness required for the compression. We also believe that  using a more sophisticated  
concentration result (such as Martingale) it is possible to obtain a more tighter
concentration guarantee, and as a consequence a smaller compression length.  
\end{rem}

\section{Conclusion and open questions}\label{sec:Conclusion}
In this work, to the best of our knowledge, we obtain the first efficient binary to binary 
 compression scheme for preserving Hamming distance and Inner Product for high dimensional sparse data. For 
 Hamming distance in fact our scheme obtains the ``no-false-negative''
 guarantee analogous to the one obtained in recent paper by Pagh~\cite{Pagh16}.
 Contrary to the ``local'' projection approach of previous schemes we  first randomly 
 partition the dimension, and then take a ``global summary'' within a partition.
 The compression length of our scheme depends only on the sparsity and is independent 
 of the dimension as opposed to previously known schemes.
 We also obtain a generalization of our result to real-valued setting. Our work 
 leaves the possibility of several open questions -- improving the  bounds of our compression 
 scheme, and extending it to other similarity measures such as Cosine and Jaccard similarity 
 are major open questions of our work.

%1. Can we extend our results to obtain collision based guarantee ?
%2. Improving the bounds

\begin{comment}
 
In particular, for binary data, our scheme simultaneously preserves the Hamming distance and other 
similarity measures such as a. Inner Product Similarity b. Jaccard Similarity c. Cosine Similarity. 
In addition, for Hamming distance, it comes with ``no-false-negative'' guarantee (similar to 
Theorem $3.1$ in Rasmus Pagh's recent paper~\cite{Pagh16}).
Our compression scheme for real-valued data simultaneously preserves Euclidean distance and Inner Product
and can also be extended to 'k-way' Inner Product.

In particular, for binary data, our scheme simultaneously preserves the Hamming distance and other 
similarity measures such as a. Inner Product Similarity b. Jaccard Similarity c. Cosine Similarity. 
In addition, for Hamming distance, it comes with ``no-false-negative'' guarantee (similar to 
Theorem $3.1$ in Rasmus Pagh's recent paper~\cite{Pagh16}).
Our compression scheme for real-valued data simultaneously preserves Euclidean distance and Inner Product
and can also be extended to 'k-way' Inner Product.
\end{comment}

\bibliographystyle{abbrv}
\bibliography{reference}  
\end{document}